\DeclareMathOperator*{\argmin}{argmin}
\newcolumntype{C}[1]{>{\centering\let\newline\\\arraybackslash\hspace{0pt}}m{#1}}
\begin{document}

\title{Reinstating Combinatorial Protections for Manipulation and Bribery in Single-Peaked and Nearly Single-Peaked Electorates}
% \titlerunning{Reinstating Combinatorial Protections in SP and NSP Electorates}

% \author{}
% \institute{}

 \author{Vijay Menon \and Kate Larson}
 \institute{David R. Cheriton School of Computer Science, University of Waterloo, \\ Waterloo, Ontario, Canada \\ \mailsa}

% \author{\name Vijay Menon \email vijay.menon@uwaterloo.ca \\
%        \name Kate Larson \email kate.larson@uwaterloo.ca \\
%        \addr David R. Cheriton School of Computer Science, University of Waterloo, \\ Waterloo, Ontario, Canada}

\maketitle

\begin{abstract}
% \begin{quote}
 Understanding when and how computational complexity can be used to protect elections against different manipulative actions has been a highly active research area over the past two decades. A 
recent body of work, however, has shown that many of the NP-hardness shields, previously obtained, vanish when the electorate has single-peaked or nearly single-peaked preferences. In light of 
these results, we investigate whether it is possible to reimpose NP-hardness shields for such electorates by allowing the voters to specify partial preferences instead of insisting they cast complete 
ballots. In particular, we show that in single-peaked and nearly single-peaked electorates, if voters are allowed to submit top-truncated ballots, then the complexity of manipulation and bribery for 
many voting rules increases from being in P to being NP-complete.
% \end{quote}
\end{abstract}
 
 \section{Introduction}
Collective decision making problems abound in human as well as multiagent contexts and they typically proceed by using a mechanism that aggregates the preferences of the participating 
agents. Voting is one such mechanism, and is, in fact, one of the most widely used ones. For instance, it has been proposed as a mechanism for web spam reduction \cite{dwork2001}, for collaborative 
filtering and recommender systems \cite{pennock2000}, and for multiagent planning \cite{ephrati1993}. As a result of its importance, voting has been extensively studied, and the theory of social 
choice has a number of impossibility results surrounding fundamental issues that arise in running elections. Among these, one aspect that has attracted considerable attention is the impact of 
different manipulative actions (bribery, control, and manipulation) on elections. Although the Gibbard-Satterthwaite theorem states that all reasonable voting rules are manipulable, starting with the 
seminal work of \citeauthor{bartholdi} \cite{bartholdi}, there has been much work that has looked into how computational complexity can be used as a barrier to protect elections against 
different manipulative actions (see \cite{faliszewski2010} for a survey).  

While there have been a lot of results in computational social choice that has obtained NP-hardness shields for different voting rules using constructions on combinatorially rich structures such as 
partitions and covers, a recent body of work, which was mainly inspired from the work of \citeauthor{walsh} \cite{walsh}, has shown that such combinatorial protections vanish when the 
voters have structured preferences. In particular, in single-peaked electorates it was observed by \citeauthor{faliszewski} \cite{faliszewski} for control and manipulation and by 
\citeauthor{brandt2010} \cite{brandt2010} for bribery that many of the previously known NP-hardness results fall into polynomial time. Subsequently, there has also been work done on the notion of 
nearly single-peaked preferences by \citeauthor{faliszewski2014} \cite{faliszewski2014} where similar, although not as stark, observations have been made. 

% In the context of the above results, this paper aims to take this line of research in a new direction by looking at the impact of partial preferences on manipulative actions in single-peaked 
% and nearly single-peaked electorates. In particular, we consider top-truncated ballots, which are natural in settings where an agent is certain about his most preferred candidates, but is unsure 
% or indifferent among the remaining ones, and we look at their impact on manipulative actions in single-peaked and nearly single-peaked settings. In doing so, we arrive at a number of interesting 
% results, which in turn form the theme of our paper -- of reinstating combinatorial protections by allowing top-truncated voting. 

In the context of the above results, the present paper aims to take this line of research in a new direction by looking at the impact of partial preferences on manipulative actions in single-peaked 
and nearly single-peaked electorates. In particular, we consider top-truncated preferences, which are natural in settings where an agent is certain about his most preferred candidates, but is unsure 
or indifferent among the remaining ones, and we look at their impact on manipulative actions in single-peaked and nearly single-peaked settings. In doing so, we arrive at a number of surprising 
results, which in turn forms the theme of our paper -- of reinstating combinatorial protections by allowing top-truncated voting. Although, as noted by \citeauthor{faliszewski2014} 
\cite{faliszewski2014}, polynomial time algorithms for manipulative actions such as bribery and control are not always unethical or ``bad'' (as it can be a valuable tool in the hands of say the 
campaign manager, committee chair etc.), in majority of the situations they aren't socially ``good'' either and hence we believe that it is important to have elections protected against them. This 
thought, along with the ``easiness'' results previously obtained, and the fact that, among structured preference profiles, single-peaked and nearly single-peaked preferences are the most widely 
studied forms the main motivation of our paper. Here we look at how combinatorial protections that vanish in single-peaked and nearly single-peaked electorates can be reinstated in many cases by 
allowing top-truncated voting. 

Our contributions include the following:

\begin{enumerate}
 \item We, for the first time, systematically study the impact of partial voting on manipulative actions in structured preference profiles. In particular, we look at the problem of manipulation and 
bribery in single-peaked and nearly single-peaked settings when top-truncated ballots are allowed. 

 \item Under the assumption that the voters submit complete ballots, we first provide polynomial time algorithms for manipulation and weighted-bribery for certain voting rules in single-peaked and 
nearly single-peaked settings, thus extending the works of ~\citeauthor{faliszewski} \cite{faliszewski} for manipulation, \citeauthor{brandt2010} \cite{brandt2010} for bribery, and 
\citeauthor{faliszewski2014} \cite{faliszewski2014} for nearly single-peaked electorates. We then show how these polynomial-time problems become NP-complete when top-truncated ballots are 
allowed.

\item We show an example of a natural voting rule where, contrary to intuition, the complexity of manipulation actually increases when moving from the general case (i.e. when there is no restriction 
on the preferences) to the single-peaked case. In particular, in Theorem \ref{eVeto-SP} we show how the complexity of manipulating eliminate(veto), when top-truncated ballots are allowed, moves from 
being in $P$ 
in the general case to being NP-complete in the single-peaked case. 
\end{enumerate}

The overarching theme in this work is that top-truncated voting is useful in reinstating combinatorial protections in single-peaked and nearly single-peaked electorates. We believe that the results 
form a \emph{win-win} scenario: allowing voters to specify top-truncated ballots (or partial preferences, in general) is extremely useful and often necessary in many multi-agent systems applications 
and even in real-world elections, and allowing this additional flexibility in turn gives us what we want in terms of making the complexity of many manipulative-action problems hard.    

\subsubsection{Related Work} There are two lines of research that are closely related to our work. First is the work on structured preference profiles. This line of research has mainly looked at 
single-peaked preferences and more recently at nearly single-peaked preferences. The notion of single-peaked preferences was introduced by \citeauthor{black1948} \cite{black1948} and 
subsequently there has been a lot of work in social choice literature on the same. Among these, in particular, we note the work of \citeauthor{cantala2004} \cite{cantala2004} who introduced the 
concept of ``single-peaked with outside options" which is similar to the notion of single-peaked with top-truncated ballots that we study here, and the work of \citeauthor{barbera2007} 
\cite{barbera2007} who discussed how properties of different variants of single-peaked preferences change for varying amounts of indifference permitted.

In computational social choice, three papers that are most related to our work are \cite{faliszewski}, \cite{brandt2010}, and \cite{faliszewski2014}. The first two papers discuss manipulation and 
control, and bribery, respectively, and show how most of the NP-hardness shields for these manipulative actions vanish in single-peaked settings. The third paper studies the complexity of 
manipulative actions in nearly single-peaked electorates and shows how in many cases the hardness results evaporate. Our paper, in contrast, follows the theme of reinstating these combinatorial 
protections. 

Although the above mentioned papers are by far the most related to our work, it is worth noting that this line of research has mainly focused on the problem of single-peaked consistency where, 
informally, the task is to determine if a given set of preferences is single-peaked or otherwise (see \cite{bartholdi1986,doignon1994,escoffier2008,erdelyi2013,lackner2014}).  

The second line of research that is related to this paper is the work on election problems when partial preferences are allowed. Here, the papers that are most related are those of \citeauthor{nina} 
\cite{nina}, \citeauthor{fitzsimmons} \cite{fitzsimmons}, and \citeauthor{menon} \cite{menon}. \citeauthor{nina} \cite{nina} were the first to look at complexity of constructive 
manipulation under top-truncated voting and they provided an analysis for three particular voting protocols. Subsequently, \citeauthor{fitzsimmons} \cite{fitzsimmons} looked into how the 
complexity of bribery, control, and manipulation is affected when ties are allowed, and \citeauthor{menon} \cite{menon} generalized the complexity of constructive and destructive manipulation 
with top-truncated ballots for broader classes of voting rules and also looked at the impact on complexity when there is uncertainty about the non-manipulators' votes. While all three papers 
discuss results in the general setting (i.e. when there is no restriction on the structure of the preferences), in contrast to this, in this paper, we look at the complexity of manipulation and 
bribery with top-truncated ballots when the preferences are restricted to being single-peaked or nearly single-peaked. 

Additionally, we also note that there has been work on other election problems when preferences are only partially specified. Notable among them are the works of \citeauthor{konczak} 
\cite{konczak} and \citeauthor{xia} \cite{xia} on the possible and necessary winners problem, and the work of \citeauthor{baumeister} \cite{baumeister} which discusses planning 
various kinds of campaigns in settings where the ballots can be truncated at the top, bottom or both. The extension-bribery problem they introduce in that paper is closely related to the manipulation 
problem with top-truncated ballots.

  \section{Preliminaries}
\subsection{Elections, Voting Rules, and Preferences}
\subsubsection{Elections} An election is modeled as a pair $E = (C, V)$, where $C$ is the set of candidates and $V$ is the set of voter preferences. For every voter $v_i$, $\succ_{i}$ denotes their 
preference order over $C$. $\succ_{i}$ is said to be a complete order (or a complete vote) if it is antisymmetric, transitive, and a total ordering on $C$. Here, we also consider incomplete orders in 
the form of top-truncated ballots. $\succ_{i}$ is said to be a top-truncated order (or simply, a top order) if it is a complete order over any non-empty $C' \subseteq C$ and if all candidates in $C 
\setminus C'$ are assumed to be tied and are ranked below the candidates in $C'$. For simplicity, we sometimes use $(c_1, \cdots, c_m)$ to denote a preference order $c_1 \succ \cdots \succ c_m$. Since 
this paper looks at weighted elections, additionally, every voter $v_i$ has a weight $w_i$ associated with them.

% \begin{example}
%  Let $ C = \{c_1, c_2, c_3, c_4\}$. A voter $v_i$ with a preference order $c_1 \succ_i c_2 \succ_i c_3 \succ_i c_4$ is said to have a complete order, while a voter $v_j$ with 
% a preference order $c_3 \succ_j c_2$ is said to have a top-truncated order, where $c_1$ and $c_4$ are assumed to be tied and are ranked below $c_3$ and $c_2$. 
% \end{example}

\subsubsection{Voting Rules} An election system or a voting protocol takes the set of votes $V$ as input and it outputs a collection $W \subseteq C$ called the winner set. A candidate is said to 
be a Condorcet winner if it is preferred over every other candidate by a strict majority of the voters, while it is said be a weak-Condorcet winner if it is preferred over every other candidate by at 
least half of the voters. In this paper, we consider the following voting rules. We first present their original definitions which is on complete orders and then discuss how top orders can be handled.

\begin{enumerate}
 \item \textbf{Positional scoring rules:} A positional scoring rule is defined by a scoring vector $\alpha = \langle \alpha_1, \allowbreak \cdots, \alpha_m \rangle$, where $\alpha_1 \geq \cdots \geq 
\alpha_m$. For each voter $v$, a candidate receives $\alpha_i$ points if it is ranked in the $i$th position by $v$. 
% In a scoring rule, all the candidates with highest total score form the winner set. 
Some examples of scoring rules are the \textit{plurality rule} with $\alpha = \langle 1, 0, \cdots, 0\rangle$, the \textit{Borda rule} with $\alpha = \langle m-1, m-2, \cdots, 0\rangle$, and the 
\textit{veto rule} with $\alpha = \langle 1, \cdots, 1, 0\rangle$.

\item \textbf{Scoring elimination rules:} Let $X$ be any scoring rule. Given a complete ordering, eliminate($X$) is the rule that successively eliminates the candidate with the lowest score according 
to $X$. Once a candidate is eliminated, the rule is then repeated with the reduced set of candidates until there is a single candidate left. In this paper, we mainly consider two scoring elimination 
rules: eliminate(Borda) -- which is also known as Baldwin's rule or Fishburn's version of Nanson's rule \cite{niou1987} -- and eliminate(veto). 

% \item \textbf{Copeland$^\alpha$:} \citeauthor{faliszewski2008} \cite{faliszewski2008} introduced a family of election systems known as Copeland$^{\alpha}$ by introducing a parameter $\alpha$ 
% ($\alpha \in \mathbb{Q}$, $0\leq \alpha \leq 1$) that essentially describes the value of a tie. In Copeland$^{\alpha}$, for each pair of candidates, the candidate preferred by the majority receives 
% one point and the other one receives a 0. In case of a tie, both receive $\alpha$ points. All the candidates whose total score is at least as high as that of all the other candidates form the winner 
% set. Usually when papers use the Copeland rule they essentially mean Copeland$^{0.5}$ which was the original rule proposed by Copeland. 

\item \textbf{Copeland$^\alpha$:} Let $\alpha \in \mathbb{Q}$, $0\leq \alpha \leq 1$. In Copeland$^{\alpha}$, introduced by \citeauthor{faliszewski2008} \cite{faliszewski2008}, for each pair of 
candidates, the candidate preferred by the majority receives one point and the other one receives a 0. In case of a tie, both receive $\alpha$ points. 
\end{enumerate}

A voting rule which, on every input that has a weak-Condorcet winner, outputs the set of all weak-Condorcet winners as the set of winners is said to be weak-Condorcet consistent. 
% The Baldwin's rule is known to be weak-Condorcet consistent in single-peaked electorates \cite{brandt2010}.

 To deal with top-truncated orders in positional scoring rules where a voter ranks only $k$ out of the $m$ candidates ($k < m$), we use the following three schemes that were used by 
\citeauthor{nina} \cite{nina} in their preliminary work on manipulation with top orders. \citeauthor{emerson} \cite{emerson} also used the same schemes for the Borda 
rule. 

\begin{enumerate}
 \item \textbf{Round-up:} A candidate ranked in the $i$th position ($i\leq k)$ receives a score of $\alpha_i$, while all the unranked candidates receive a score of $\alpha_m$. For any positional 
scoring rule $X$, we denote this by $X_{\uparrow}$.  

 \item \textbf{Round-down:} A candidate ranked in the $i$th position ($i\leq k)$ receives a score of $\alpha_{m-(k-i)-1}$, while all the unranked candidates receive a score of $\alpha_m$. For any 
positional scoring rule $X$, we denote this by $X_{\downarrow}$. 

 \item \textbf{Average score:} A candidate ranked in the $i$th position ($i\leq k)$ receives a score of $\alpha_{i}$, while all the unranked candidates receive a score of $\frac{\textstyle\sum_{k < j 
\leq m}\alpha_j}{m - k}$. For any positional scoring rule $X$, we denote this by $X_{av}$. 
\end{enumerate}

% \begin{example}
%  Let $C = \{c_1, c_2, c_3, c_4\}$ and let $v$ be a voter with $\succ_{v} = (c_3, c_1)$. Then, in the round-up case, $c_3$ receives a score of $\alpha_1$, $c_1$ receives a score of $\alpha_2$, and both 
% $c_2$ and $c_4$ receive $\alpha_4$. In the round-down case, $c_3$ receives a score of $\alpha_2$, $c_1$ receives a score of $\alpha_3$, and both $c_2$ and $c_4$ receive a score of $\alpha_4$, and in 
% the average score case, $c_3$ receives a score of $\alpha_1$, $c_1$ receives a score of $\alpha_2$, and both $c_2$ and $c_4$ receive a score of $\frac{\alpha_3 + \alpha_4}{2}$. 
% \end{example}

In scoring elimination rules, we deal with top-truncated votes by using the round-up scheme described above. Here, we consider a vote to be valid only until at least one of the candidates listed in it 
is remaining in the election. In other words, we simply ignore a vote once all the candidates listed in it are eliminated. In the case of Copeland$^\alpha$, top orders are dealt by just keeping to 
the definition which assumes that all the unranked candidates are tied and are ranked below the ranked candidates.

\subsubsection{Single-Peaked Preferences} The notion of single-peaked preferences, first introduced by \citeauthor{black1948} \cite{black1948}, captures settings where the preferences of a voter are 
based on a one-dimensional axis. The basic idea here is that every voter has a peak (their most preferred alternative) and that their utility for an alternative decreases the further it is away from 
this peak. Below, we provide the formal definition of single-peaked preferences on complete orders and then discuss what single-peakedness means when we allow top-truncated voting. We use the 
definition of single-peakedness found in \cite{faliszewski}. 

\begin{definition}
 A collection of votes, $V$, is said to be single-peaked if there exists a linear order $L$ over $C$ such that for every triple of candidates $a, b,$ and $c$ it holds that:
 \begin{equation*}
(a L b L c \vee c L b L a) \implies (\forall v \in V)\left[a \succ_{v} b \implies b \succ_{v} c \right].
 \end{equation*}    
\end{definition}

When voters are allowed to present top-truncated ballots, this notion of single-peakedness essentially captures those scenarios where they have a continuous range over $L$ over which their 
preferences are single-peaked and outside of which they are indifferent among the alternatives. In social choice theory, this notion of single-peaked preferences has been captured as single-peaked 
with outside options in the context of choosing a level of public good by \citeauthor{cantala2004} \cite{cantala2004}.

Throughout this paper, following the model proposed by \citeauthor{walsh} \cite{walsh}, we assume that the societal order $L$ is given as part 
of the input.  

\begin{example}
 Let $C = \{c_1, c_2, c_3, c_4\}$ and $c_2 L c_4 L c_3 L c_1$ be a linear order over $C$. Then, the preference orders $c_4 \succ c_3 \succ c_2 \succ c_1$ and $c_4 \succ c_2 \succ c_3 \succ c_1$ are 
both valid complete single-peaked orders, while the preference order $c_4 \succ c_1 \succ c_2 \succ c_3$ is not a valid single-peaked order. Also, with respect to the given linear order, $c_4 \succ 
c_3$ and $c_4 \succ c_2$ are both valid top-truncated single-peaked orders, while $c_4 \succ c_1$ is not. 
\end{example}

\subsubsection{Nearly Single-Peaked Preferences} Although single-peaked preferences are an interesting domain to study, it is often the case that real-world electorates are not 
truly single-peaked, but are only very close to being single-peaked. The notion of ``near'' single-peakedness was first raised by \citeauthor{conitzer2007} \cite{conitzer2007} and 
\citeauthor{escoffier2008} \cite{escoffier2008}, and was subsequently systematically studied by \citeauthor{faliszewski2014} \cite{faliszewski2014} and \citeauthor{erdelyi2013} 
\cite{erdelyi2013}. In this paper, we look at only one notion of ``nearness'', namely the maverick notion which is defined below.

\begin{definition}[$k$-maverick SP Electorate]
A collection of votes $V$ is called a $k$-maverick SP electorate if all but at most $k$ of the voters are single-peaked consistent with the societal order $L$. 
\end{definition}

\subsection{Manipulative Actions}
In this paper, we consider two manipulative action problems: manipulation and bribery. In particular, we study the Constructive Coalitional Weighted Manipulation (CWCM) problem and the 
Weighted-bribery problem. CWCM was first studied by \citeauthor{conitzer} \cite{conitzer} and is described below. 
% Informally, the main objective in CWCM is to decide if it is possible for a certain set of voters (manipulators) to vote in such a way that it will result in a preferred candidate winning. 

\begin{definition}[CWCM]
Given a set of candidates, $C$, a set of weighted votes, $S$ (preferences of the non-manipulators), the weights for a set of votes, $T$ (manipulators' votes), and a preferred candidate, $p$, we 
are asked if there exists a way to cast the votes in $T$ so that $p$ is a winner in the election $E = (C, S \cup T)$. 
\end{definition}

The complexity-theoretic study of the bribery problem was first introduced by \citeauthor{faliszewski2009} \cite{faliszewski2009}. 
% The main objective in the bribery problem is to decide if there exists a subset of voters whose votes can be changed, within a certain budget, to make a preferred candidate win. 
Here we mainly look at the weighted version of the bribery problem which is described below. 

\begin{definition}[Weighted-bribery]
 Given a set of candidates, $C$, the set of weighted votes, $V$, a preferred candidate, $p$, and a limit, $k \in \mathbb{N}$, we are asked if there exists a way to change the votes 
of at most $k$ of the voters in $V$ so that it results in $p$ being a winner.   
\end{definition}

Throughout this paper, unless otherwise specified, we use the non-unique winner model (where the objective is to make the preferred candidate \textit{a} winner) as our standard model.

\subsection{Computational Complexity}
In most of the proofs for NP-hardness in this paper, we use reductions from either the well-known NP-complete problem Partition, from a variant of the Partition problem (Partition'), or from a 
variant of the subset sum problem (Fixed-Difference Subset Sum). 

\begin{definition}[{Partition}]
 Given a set of non-negative integers $S = \{a_i\}_{1 \leq i \leq n}$ summing to $2K$, we are asked if there exists a subset $S_1$ of $S$ which sums to $K$. 
\end{definition}

\begin{definition}[Partition']
 Given a set of non-negative integers $S = \{a_i\}_{1 \leq i \leq n}$ such that $\sum_i a_i = 2nK$ and $a_i \geq K$, we are asked if there exists a subset $S_1$ such that $\sum S_1 = nK$. 
\end{definition}

The NP-completeness of Partition' can be shown by reduction from the Partition problem.

% \begin{lemma}
%  Partition' is NP-complete.
% \end{lemma}

\begin{restatable}{lemma}{parti}
 Partition' is NP-complete.
\end{restatable}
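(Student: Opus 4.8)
The plan is to establish membership in NP and then NP-hardness by a reduction from Partition, as suggested in the text. Membership is immediate: a subset $S_1$ is a polynomial-size certificate, and checking $\sum S_1 = nK$ takes linear time.

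For hardness I reduce from Partition. To avoid a clash with the notation used in the definition of Partition', write a Partition instance as $\{b_1,\dots,b_m\}$ with $\sum_i b_i = 2B$, where we must decide whether some subset sums to $B$. From this I build a Partition' instance on $2m$ elements by taking $m$ \emph{shifted} elements $b_i + M$ together with $m$ \emph{dummy} elements each equal to $M$, where $M$ is a large positive constant fixed below. The total is $2B + 2mM$, so setting $n := 2m$ gives $K = (B + mM)/(2m)$ and target value $nK = B + mM$. (If one insists that $K$ be an integer, it suffices to first replace each $b_i$ by $2m\,b_i$ and take $M$ even; this rescaling does not affect the argument.)

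Correctness reduces to showing the constructed instance is a yes-instance of Partition' iff the original is a yes-instance of Partition. The forward direction is direct: if $S_1 \subseteq \{1,\dots,m\}$ satisfies $\sum_{i \in S_1} b_i = B$ and $|S_1| = j$, then selecting the $j$ shifted elements indexed by $S_1$ together with any $m-j$ dummy elements gives a subset whose sum is $B + jM + (m-j)M = B + mM = nK$. The backward direction is the crux, and is where the size of $M$ does the work. Given any subset $T$ with $\sum T = B + mM$, say it consists of $p$ shifted elements and $q$ dummies, and let $s$ be the sum of the chosen $b_i$; then $s + (p+q)M = B + mM$, i.e. $s - B = (m - p - q)M$. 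Since $s \in [0,2B]$ we have $|s - B| \le B$, so once $M > B$ (e.g. $M = 2B+1$, or $M = 2mB+2$ after the integer-rescaling above) the only multiple of $M$ in that range is $0$; hence $p + q = m$ and $s = B$, so the chosen shifted indices form a Partition solution.

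The remaining routine check is the structural constraint $a_i \ge K$. Every element is at least $M$ (the dummies equal $M$, the shifted ones exceed it), so it suffices that $M \ge K = (B + mM)/(2m)$, which rearranges to $M \ge B/m$ and holds comfortably for the large $M$ chosen above; moreover the total is even, so $nK$ is an integer. I expect the main obstacle to be exactly this backward direction: choosing $M$ large enough that no Partition' solution can offset a wrong element-count by an under- or over-sized contribution from the $b_i$'s, while simultaneously respecting $a_i \ge K$.
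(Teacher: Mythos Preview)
Your reduction is correct and follows exactly the approach the paper indicates (a reduction from Partition); the paper itself does not spell out the construction, so your shift-by-$M$-plus-dummies argument is a valid and complete instantiation of that idea. The only cosmetic point is that the integrality of $K$ is not actually required by the definition (only $nK$ must be an integer, which it is), so the rescaling remark, while harmless, is unnecessary.
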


\begin{definition}[{Fixed-Difference Subset Sum}]
 Given a set of non-negative integers $S = \{a_i\}_{1 \leq i \leq n}$ summing to $2K$, we are asked if there exists two disjoint subsets $S_1$, $S_2$ of $S$ such that $\sum S_1 - \sum S_2 = K$, where 
$\sum S_i$ denotes the sum of all the elements in the set $S_i$. 
\end{definition}

The proof of NP-completeness of Fixed-Difference Subset Sum can be found in \cite{menon}.

    \newif\ifdraft
 \drafttrue
%  \draftfalse

\section{Manipulation} \label{manipulation}
In this section we study CWCM with top-truncated votes in both single-peaked and nearly single-peaked electorates. Since the theme of this paper is the reinstatement of combinatorial protections by 
top-truncated voting, for all the voting rules considered in this section, we present both the ``easiness'' result (if not already known from previous work) as well as the subsequent ``hardness'' 
result that arises as a consequence of allowing top-truncated ballots. 

\subsection{Single-Peaked Electorates} \label{sec-SP}
\citeauthor{walsh} \cite{walsh} was the first  to consider manipulation with single-peaked preferences and he showed that STV remains NP-hard to manipulate for 3 candidates. Subsequently, 
\citeauthor{faliszewski} \cite{faliszewski} showed that for many voting protocols which are usually hard to manipulate, restricting the preferences to being single-peaked makes them easy. In 
particular, they showed that any 3-candidate scoring rule with $(\alpha_1 - \alpha_3) \leq 2(\alpha_2 - \alpha_3)$ is easy to manipulate. This result was then extended to obtain a complete 
characterization for any $m$-candidate scoring rule in \cite{brandt2010}. Here, we look at 3-candidate scoring rules again and we study the impact on complexity of manipulation when top-truncated 
voting is allowed.

% \begin{theorem} \label{SP-SC-D}
%  In single-peaked electorates, for any 3-candidate scoring rule $X$ that is not isomorphic to plurality or veto, CWCM with top-truncated votes in $X_{\downarrow}$ is NP-complete.
% \end{theorem}

% \begin{restatable}{theorem}{spSCsd} \label{SP-SC-D}
%  In single-peaked electorates, for any 3-candidate scoring rule $X$ that is not isomorphic to plurality or veto, CWCM with top-truncated votes in $X_{\downarrow}$ is NP-complete.
% \end{restatable}

\begin{restatable}{theorem}{spSCsd} \label{SP-SC-D}
 For any 3-candidate scoring rule $X$ that is not isomorphic to plurality or veto, in single-peaked electorates, CWCM with top-truncated votes in $X_{\downarrow}$ is NP-complete.
\end{restatable}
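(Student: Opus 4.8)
The plan is to dispose of membership immediately and then spend all the effort on hardness. Since CWCM asks only whether \emph{some} assignment of the manipulators' ballots makes $p$ a winner, and a winner can be determined in polynomial time, the problem is in NP; I would reduce from Partition for the hardness. I would first normalise: up to the affine transformations under which scoring rules are isomorphic, a $3$-candidate rule that is neither plurality nor veto is $\langle 1,\beta,0\rangle$ with $0<\beta<1$, where $\beta=(\alpha_2-\alpha_3)/(\alpha_1-\alpha_3)$. Fix the societal axis $c_1\,L\,c_2\,L\,c_3$. Under $X_{\downarrow}$ with three candidates, the single-peaked ballots available split into two families: the four complete single-peaked orders (equivalently the length-$2$ truncations, which score identically), handing out $1,\beta,0$; and the three length-$1$ truncations, each giving its peak only $\beta$ and both other candidates $0$. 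This last family is precisely what top-truncation creates, and it is the lever for the whole argument.

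Second, the reduction. Given a Partition instance $\{a_i\}$ summing to $2K$, I take $n$ manipulators of weights $a_i$ and choose the non-manipulators to be single-peaked truncated ballots realising a precise base configuration; both the preferred candidate and that configuration depend on $\beta$, and the two regimes overlap so as to cover all of $(0,1)$. When $3(\alpha_2-\alpha_3)<2(\alpha_1-\alpha_3)$ (i.e.\ $\beta<2/3$), I set $p=c_2$ and use length-$1$ non-manipulator ballots peaked at $c_1$ and at $c_3$ to put both extremes ahead of the middle by the \emph{same} margin $K(2-\beta)$. This configuration is provably unrealisable with complete single-peaked ballots once $\beta\ge 1/2$ (the middle candidate is never last, so it always collects at least $\beta$), which is exactly why the complete-ballot problem was easy and why truncation reinstates hardness. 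I then argue that each manipulator's rational decision collapses to whether to give the spare $\beta$ to $c_1$ or to $c_3$, and that $c_2$ ties-and-wins iff the weight routed to $c_1$ is exactly $K$, i.e.\ iff the $a_i$ admit a Partition. When instead $2(\alpha_2-\alpha_3)>\alpha_1-\alpha_3$ (i.e.\ $\beta>1/2$), I switch to $p=c_1$: now the genuine choice is between the complete order peaked at $c_1$ (scores $1,\beta,0$) and the length-$1$ truncation $\{c_1\}$ (scores $\beta,0,0$), the latter being exactly the new option. Because these two ballots trade ``beat $c_3$'' against ``beat $c_2$'' in opposite directions, the non-manipulator margins can be set so that both dominance requirements bind at once, again forcing an exact split at $K$.

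The step I expect to be the real obstacle is soundness, since truncation hands the manipulators a strictly richer ballot set than in the complete-vote world and I must rule out a win that does not encode a Partition. Two things need care. First, every ballot other than the two I want a manipulator to use must be shown dominated for the task at hand; for $p=c_2$ the dangerous one is the length-$1$ ballot peaked at $c_2$, which spares \emph{both} extremes, and the $p=c_1$ case has its own family to dominate. Second, that dangerous ballot must itself be excluded: for $p=c_2$ a short computation shows that routing total weight $w$ through the peaked-at-$c_2$ singleton is admissible only when $(2-3\beta)w\le 0$, so the hypothesis $\beta<2/3$ forces $w=0$ and collapses the choice to the intended binary one, while the mirror-image algebra pins the $p=c_1$ case down and is precisely why the split is placed where it is. Completeness (a Partition yields a winning manipulation) is then a one-line score check with the margins above. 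The only remaining bookkeeping is to confirm that the required base margins are simultaneously realisable by nonnegative combinations of single-peaked truncated ballots, which holds because the three length-$1$ ballots raise the three candidates' base scores independently; NP-completeness then follows.
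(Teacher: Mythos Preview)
Your proposal is correct and is essentially the paper's approach: a Partition reduction with two constructions, one placing $p$ at an end of the axis (manipulators choose between the full single-peaked order and the singleton $(p)$) and one placing $p$ in the middle (manipulators choose between the two full orders, with the singleton $(p)$ shown to carry weight zero via exactly the inequality you derive). The paper partitions at $\beta=1/2$ into three cases whereas you use the overlapping ranges $(0,2/3)$ and $(1/2,1)$, and in your extreme-$p$ case the ``mirror-image algebra'' you anticipate is in fact unnecessary, since every single-peaked ballot other than your two is already dominated by one of them.
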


\begin{proof}
Since there are only three candidates, the scoring vector for the corresponding positional scoring rule is defined by $\langle \alpha_1, \alpha_2, \alpha_3 \rangle$, where $\alpha_1 > \alpha_2 > 
\alpha_3 = 0$ (because $\alpha_1 = \alpha_2$ is isomorphic to veto, $\alpha_2 = \alpha_3$ is isomorphic to plurality, and $\alpha_3$ can be taken to be zero since translating the scores in a 
scoring rule does not affect the outcome of the rule).

 It is easy to see that the problem is in NP. To prove NP-hardness, we proceed by considering the following three cases. In all the cases we use a reduction from the Partition problem.
 
 \item{\textbf{Case 1: $\alpha_2 < \alpha_1 < 2\alpha_2$.}} Consider the following instance of CWCM, where $a, b,$ and $p$ are the three candidates. Let $p L a L b$ be the linear ordering over the 
candidates. This linear ordering in turn restricts the set of allowed votes to $\{(p, a, b), (a, p, b), (a, b, p), (b, a, p), (a), (b), (p)\}$. In $S$, let there be a vote of weight $K$ voting for 
$(a, b, p)$, $(b, a, p)$, $(b)$, and $(p)$ each. As a result the scores of $a, b,$ and $p$ are $\alpha_1K  + \alpha_2K$, $\alpha_1K  + 2\alpha_2K$, and $\alpha_2K$ respectively. In $T$, let each 
$k_i$ 
have a vote of weight $k_i$. 

Suppose there exists a partition. Let those manipulators in the first half vote $(p, a, b)$ and let the others vote $(p)$. Then the scores of $a, b,$ and $p$ are $\alpha_1K  + 2\alpha_2K$, 
$\alpha_1K  + 2\alpha_2K$, and $\alpha_1K + 2\alpha_2K$ respectively, and hence $p$ wins by tie-breaking. 

Conversely, suppose there exists a manipulation in favor of $p$. Let $x$ and $y$ denote the total weight of the manipulators voting for $(p, a, b)$ and $(p)$, respectively. Since there exists a 
successful manipulation in favor of $p$, the score of $p$ should be at least as much as that of $a$. Therefore, we have: $x\alpha_1 + y\alpha_2 + \alpha_2K \geq \alpha_1K  + x\alpha_2 +
\alpha_2K$. Using the fact that $x + y = 2K$, this simplifies to $(x - K)(\alpha_1 - 2\alpha_2) \geq 0$. But since we assumed that $\alpha_1 < 2\alpha_2$, we have $x \leq K$. Similarly, the score of 
$p$ should also be at least as much as $b$ and so we have $x\alpha_1 + y\alpha_2 + \alpha_2K \geq \alpha_1K  + 2\alpha_2K$. Simplifying this we have $(x - K)(\alpha_1 - \alpha_2) \geq 0$, and since 
we assumed that $\alpha_2 < \alpha_1$, therefore $x \geq K$. But then, $x$ cannot be both greater than and lesser than $K$ at the same time. So $x$ has to equal to $K$ and this in turn implies that 
$y = K$ and that there exists a partition. 

 \item{\textbf{Case 2: $\alpha_1 > 2\alpha_2$.}} Consider the following instance of CWCM, where $a, b,$ and $p$ are the three candidates. Let $a L p L b$ be the linear ordering over the 
candidates. This linear ordering in turn restricts the set of allowed votes to $\{(a, p, b), (p, a, b), \allowbreak (p, b, a), (b, p, a), (a), (b), (p)\}$. In $S$, let there be a vote of weight 
$(2\alpha_1 - \alpha_2)K$ voting for $(a, p, b)$, and $(b, p, a)$ each. As a result the scores of $a, b,$ and $p$ are $(2\alpha_1 - \alpha_2)\alpha_1K$, $(2\alpha_1 - \alpha_2)\alpha_1K$, and 
$2(2\alpha_1 - \alpha_2)\alpha_2K$ respectively. In $T$, let each $k_i$ have a vote of weight $(\alpha_1 - 2\alpha_2)k_i$. 

% \footnotetext{This CWCM instance is the same as in \cite[Theorem 4.4]{faliszewski}}

Suppose there exists a partition. Let those manipulators in the first half vote $(p, a, b)$ and let the others vote $(p, b, a)$. Then the scores of $a, b,$ and $p$ are all $2(\alpha_1^{2} - 
\alpha_2^{2})K$, and hence $p$ wins by tie-breaking. 

Conversely, suppose there exists a manipulation in favor of $p$. Let $x, y$, and $z$ be the sum of the $k_i$'s of the manipulators in $T$ who vote $(p, a, b), (p, b, a),$ and $(p)$, respectively. 
Since there exists a successful manipulation in favor of $p$, the score of $p$ should be at least as much as that of $a$. Therefore, we have: $2(2\alpha_1 - \alpha_2)\alpha_2K + 
(x+y)\alpha_1(\alpha_1 
- 2\alpha_2) + z\alpha_2(\alpha_1 - 2\alpha_2) \geq (2\alpha_1 - \alpha_2)\alpha_1K + \alpha_1K  + x\alpha_2(\alpha_1 - 2\alpha_2)$. Using the fact that $x + y + z = 2K$, this simplifies to 
$\alpha_2K - z(\alpha_1 - \alpha_2) \geq x\alpha_2$ (1). Doing the same with respect to $p$ and $b$ we have, $\alpha_2K - z(\alpha_1 - \alpha_2) \geq y\alpha_2$ (2). Adding (1) and (2) we have 
$z(2\alpha_1 - 3\alpha_2) \leq 0$. Since we assumed that $\alpha_1 > 2\alpha_2$, $(2\alpha_1 - 3\alpha_2) > 0$ and this in turn implies that $z \leq 0$. But then $z$ cannot be less than zero and so 
it should be zero. Substituting $z=0$ in (1) and (2) we have $x \leq K$, and $y \leq K$ respectively. Since $x + y + z = 2K$, this implies that $x = K$ and $y = K$ and that there exists a partition.

\item{\textbf{Case 3: $\alpha_1 = 2\alpha_2$.}} Consider the following instance of CWCM, where $a, b,$ and $p$ are the three candidates. Let $a L p L b$ be the linear ordering over the 
candidates. This linear ordering in turn restricts the set of allowed votes to $\{(a, p, b), (p, a, b), \allowbreak (p, b, a), (b, p, a), (a), (b), (p)\}$. In $S$, let there be a vote of weight 
$3K$ voting for $(a)$, and $(b)$ each. As a result the scores of $a, b,$ and $p$ are $3\alpha_2K$, $3\alpha_2K$, and 0 respectively. In $T$, let each $k_i$ have a vote of weight $k_i$. 

Suppose there exists a partition. Let those manipulators in the first half vote $(p, a, b)$ and let the others vote $(p, b, a)$. Then the scores of $a, b,$ and $p$ are $4\alpha_2K$ 
$4\alpha_2K$, and $2\alpha_1K$, respectively. Since $\alpha_1 = 2\alpha_2$, all of them have the same score and hence $p$ wins by tie-breaking. 

Conversely, suppose there exists a manipulation in favor of $p$. Let $x, y$, and $z$ be total weight of the manipulators in $T$ who vote $(p, a, b), (p, b, a),$ and $(p)$, respectively. Since there 
exists a successful manipulation in favor of $p$, the score of $p$ should be at least as much as that of $a$. Therefore, we have: $(x+y)\alpha_1 + z\alpha_2 \geq 3\alpha_2K + x\alpha_2$. Using the 
fact that $x + y + z = 2K$, and that $\alpha_1 = 2\alpha_2$ this simplifies to $y \geq K$. Doing the same with respect to $p$ and $b$ we have, $x \geq K$. But then since $x + y + z = 2K$, this 
implies that $x = K$ and $y = K$ and that there exists a partition.  \qed
% \end{case}
\end{proof}

% \begin{theorem} \label{SP-SC-AV}
%  For any 3-candidate scoring rule $X$ not isomorphic to plurality, in the single-peaked case, CWCM with top-truncated votes in $X_{av}$ is NP-complete.
% \end{theorem}

% \begin{restatable}{theorem}{spSCav} \label{SP-SC-AV}
%  In single-peaked electorates, for any 3-candidate scoring rule $X$ that is not isomorphic to plurality, CWCM with top-truncated votes in $X_{av}$ is NP-complete.
% \end{restatable}

\begin{restatable}{theorem}{spSCav} \label{SP-SC-AV}
 For any 3-candidate scoring rule $X$ that is not isomorphic to plurality, in single-peaked electorates, CWCM with top-truncated votes in $X_{av}$ is NP-complete.
\end{restatable}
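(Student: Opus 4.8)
The plan is to reduce everything to the normalized scoring vector $\langle \alpha_1, \alpha_2, 0\rangle$ with $\alpha_1 \geq \alpha_2 > 0$: translating the scores lets me take $\alpha_3 = 0$, and excluding plurality is exactly excluding $\alpha_2 = 0$, so (unlike the previous theorem) veto, where $\alpha_1 = \alpha_2$, is now \emph{included}. Membership in NP is immediate, since one can guess the manipulators' ballots and check in polynomial time that $p$ is a winner. For hardness I would give a \emph{single} gadget (no case split on the scoring vector) reducing from Partition. The key structural observation, and the one place single-peakedness is used, is to put $p$ at the left end of the axis, $p L a L b$. Then the only score profiles a single-peaked ballot ranking $p$ first can produce are that of $(p,a,b)$ (equivalently the truncation $(p,a)$), which under $X_{av}$ adds $\alpha_2$ to $a$ and $0$ to $b$, and that of the bullet ballot $(p)$, which adds $\tfrac{1}{2}\alpha_2$ to each of $a$ and $b$; crucially, \emph{no} single-peaked ballot can raise $b$ alone. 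Call these type A and type C.

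Given a Partition instance $\{a_i\}$ with $\sum_i a_i = 2K$, I would give manipulator $i$ weight $a_i$, and choose weighted single-peaked non-manipulator ballots (scaling all weights to clear the factors of $\tfrac{1}{2}$) so that, writing $p_0, a_0, b_0$ for the base scores, $a_0 - p_0 = (2\alpha_1 - \tfrac{3}{2}\alpha_2)K$ and $b_0 - p_0 = (2\alpha_1 - \tfrac{1}{2}\alpha_2)K$, so that $b$ starts strictly above $a$. For the forward direction, if a subset sums to $K$, I let those manipulators cast the bullet $(p)$ and the rest cast $(p,a,b)$; a direct computation makes all three final scores equal to $p_0 + 2\alpha_1 K$, so $p$ wins under the non-unique-winner tie-break.

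For the converse I first argue that one may assume every manipulator ranks $p$ first using type A or type C: any other single-peaked ballot can be replaced by $(p,a,b)$ or by $(p)$ without lowering $p$'s score or raising either rival's score, so a successful manipulation can be normalized to use only A and C. Letting $x_C$ be the total weight voting $(p)$, the constraint that $p$'s score is at least $a$'s simplifies to $x_C \geq K$, while the constraint that it is at least $b$'s simplifies to $x_C \leq K$; hence $x_C = K$ exactly, and the bullet-voters are a subset of $\{a_i\}$ summing to $K$, i.e.\ a partition.

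The step I expect to be the real obstacle is precisely what makes $X_{av}$ different from $X_{\downarrow}$: the bullet ballot delivers a perfectly \emph{even} split of $\alpha_2$ between the two rivals, so the naive symmetric construction with $a_0 = b_0$ is a trivial yes-instance --- the manipulators simply all bullet-vote and win with no partition at all. The remedy is the deliberate asymmetry $b_0 > a_0$ combined with placing $p$ at the end of the axis (so that $a$ can be boosted fully but $b$ can only be reached through the even split), which is exactly what pins the total bullet weight to $K$. Making this asymmetry realizable by single-peaked ballots is the fiddly part, most notably for veto: there the middle candidate $a$ can never be vetoed by a complete single-peaked vote, so the gap $b_0 > a_0$ has to be created using bullet votes, and I would verify separately that suitable non-negative integer weights exist in that case.
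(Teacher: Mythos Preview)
Your argument is correct, but it takes a genuinely different route from the paper's. The paper places $p$ in the \emph{middle} of the axis, $a\,L\,p\,L\,b$, and reduces from Fixed-Difference Subset Sum: with $p$ central there are \emph{three} undominated manipulator ballots $(p,a,b)$, $(p,b,a)$, and $(p)$, and the two inequalities $s(p)\ge s(a)$ and $s(p)\ge s(b)$ collapse to $x-y\le K$ and $x-y\ge K$, forcing $x-y=K$. The non-manipulator side is a single explicit profile (weights $8K,8K,3K,K$ on $(a),(b),(p,b,a),(p,a,b)$) that works uniformly for every $\alpha_1\ge\alpha_2>0$, including veto, with no $\alpha$-dependent scaling.

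Your construction instead puts $p$ at an \emph{end}, $p\,L\,a\,L\,b$, which kills the $(p,b,a)$ option and leaves only two manipulator types; this lets you reduce from ordinary Partition and makes the manipulator-side algebra shorter (two inequalities pinning a single variable $x_C$). The price is that the non-manipulator construction is no longer uniform: to realise the target gaps $(2\alpha_1-\tfrac32\alpha_2)K$ and $(2\alpha_1-\tfrac12\alpha_2)K$ with single-peaked ballots you end up needing weights that depend on $\alpha_1,\alpha_2$ (e.g.\ a combination of $(a,b,p)$, $(b,a,p)$, and $(b)$ with weights proportional to $\frac{(4\alpha_1-3\alpha_2)}{\alpha_1+\alpha_2}$ and $\frac{2\alpha_2}{2\alpha_1-\alpha_2}$, then cleared to integers). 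This is fine---it is a constant scaling since the scoring vector is fixed---but it is exactly the ``fiddly part'' you flag, and the paper's middle-of-axis gadget sidesteps it entirely. Your normalisation lemma (that every single-peaked ballot is dominated for $p$ by one of $(p,a,b)$ or $(p)$) is correct; the paper relies on the analogous (standard) observation implicitly. In short: both proofs are one-gadget, no case split; yours trades a cleaner manipulator analysis and a more familiar source problem for a messier non-manipulator setup.
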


\begin{proof}
 Clearly the problem is in NP. To show NP-hardness, we reduce from an arbitrary instance of the Fixed-difference Subset Sum problem. 
 
 Consider the following instance of CWCM, where $a, b,$ and $p$ are the three candidates. Let $a L p L b$ be the linear ordering over the candidates. This linear ordering in turn restricts the set of 
allowed votes to $\{(a, p, b), (p, a, b), (p, b, a), \allowbreak (b, p, a), (a), (b), (p)\}$. In $S$, let there be a vote of weight $8K$ voting for $(a)$, and $(b)$ each, vote of weight $3K$ voting 
for $(p, b, a)$, and a vote of weight $K$ voting $(p, a, b)$. As a result the scores of $a, b,$ and $p$ are $8\alpha_1K + 5\alpha_2K$, $8\alpha_1K + 7\alpha_2K$, and $4\alpha_1K + 8\alpha_2K$ 
respectively. In $T$, let each $k_i$ have a vote of weight $2k_i$. 

Suppose there exists $S_1$, $S_2$ such that $\sum S_1 - \sum S_2 = K$. Let those manipulators who are in $S_1$ vote $(p, a, b)$, those in $S_2$ vote $(p, b, a)$, and let all the remaining 
manipulators 
vote $(p)$. If $x, y,$ and $z$ denote the sum of $k_i$'s of the manipulators who vote for $(p, a, b), (p, b, a),$ and $(p)$, respectively, then the scores of $a, b,$ and $p$ are $8\alpha_1K + 
5\alpha_2K + 2x\alpha_2 + z\alpha_2$, $8\alpha_1K + 7\alpha_2K + z\alpha_2$, and $8\alpha_1K + 8\alpha_2K$, respectively. Now, since $x - y = K$ and $x + y + z = 2K$, we see that scores of $a$, $b$, 
and $p$ are the same and hence $p$ wins by tie-breaking. 

Conversely, suppose there existed a successful manipulation for $p$. From above we know that  if $x, y,$ and $z$ denote the sum of $k_i$'s of the manipulators who vote for $(p, a, b), (p, b, a),$ and 
$(p)$, respectively, then the scores of $a, b,$ and $p$ are $8\alpha_1K + 5\alpha_2K + 2x\alpha_2 + z\alpha_2$, $8\alpha_1K + 7\alpha_2K + z\alpha_2$, and $8\alpha_1K + 8\alpha_2K$, respectively. 
Since there is a successful manipulation the score of $p$ should be at least as much as that of $a$, and so this in turn implies that $x - y \leq K$. Similarly, doing it with respect to $p$ and $b$ 
we have $x - y \geq K$. But then $x - y$ cannot be both greater and smaller than $K$ at the same time. So this implies that $x - y = K$ and that there exists two subsets such that $\sum S_1 - \sum 
S_2 = K$, where $x = \sum S_1$ and $y = \sum S_2$. \qed
\end{proof}

From Theorem \ref{SP-SC-D} and Theorem \ref{SP-SC-AV} we can see that a relaxation of the complete votes assumption by additionally allowing top-truncated votes actually increases the complexity of 
CWCM for all 3-candidate scoring rules with $(\alpha_1 - \alpha_3) \leq 2(\alpha_2 - \alpha_3)$ from being in P \cite{faliszewski} to being NP-complete when either the round-down or 
average score evaluation schemes are used. However, with the round-up evaluation scheme manipulation become easy for all $m$-candidate scoring rules as shown by the theorem below. 

\begin{restatable}{theorem}{spOthers} \label{SP-SC-U}
 In single-peaked electorates, computing if a coalition of manipulators can manipulate plurality$_{\downarrow}$, veto$_{\downarrow}$, plurality$_{av}$, and $X_{\uparrow}$, for any scoring rule $X$,  
with weighted top-truncated votes takes polynomial time (for any number of candidates).
\end{restatable}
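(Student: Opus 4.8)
The plan is to give one unified greedy algorithm and justify it through a short domination argument. For a fixed evaluation scheme, call a single-peaked top-truncated ballot $v^\star$ \emph{$p$-optimal} if, among all admissible single-peaked top-truncated ballots, it simultaneously awards $p$ the maximum possible per-voter score and awards every candidate $c \neq p$ the minimum possible per-voter score. The first thing I would establish is the following principle: whenever a $p$-optimal ballot exists and is polynomial-time constructible, CWCM is in P, solved by having every manipulator cast $v^\star$ and then checking whether $p$ is a winner. For correctness, let $s(\cdot)$ denote the scores under this greedy assignment and $s_M(\cdot)$ the scores under an arbitrary successful manipulation $M$. Because $v^\star$ maximizes $p$'s per-voter contribution, $s(p) \geq s_M(p)$; because it minimizes every opponent's per-voter contribution, $s(c) \leq s_M(c)$ for all $c \neq p$. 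Since $M$ succeeds, $s_M(p) \geq s_M(c)$, and chaining gives $s(p) \geq s_M(p) \geq s_M(c) \geq s(c)$, so $p$ also wins under the greedy assignment. Hence it suffices to exhibit an efficiently constructible $p$-optimal ballot for each of the four rules.

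For three of the four rules the ballot $(p)$ that ranks $p$ alone is $p$-optimal, and a single-candidate ballot is vacuously single-peaked. In $X_{\uparrow}$ this ballot gives $p$ the top score $\alpha_1$ and sends every other candidate, being unranked, to $\alpha_m$; as $\alpha_1$ and $\alpha_m$ are the global maximum and minimum of the scoring vector, no ballot can do better on any candidate, so $(p)$ is $p$-optimal for \emph{every} scoring rule $X$ (this is precisely why round-up keeps manipulation easy in full generality, in contrast to Theorems~\ref{SP-SC-D} and~\ref{SP-SC-AV}). In plurality$_{av}$, ballot $(p)$ gives $p$ score $\alpha_1 = 1$ while each unranked candidate receives $\frac{\sum_{1 < j \leq m}\alpha_j}{m-1} = 0$, again the simultaneous maximum and minimum. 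In veto$_{\downarrow}$ the round-down scheme gives every ranked candidate score $\alpha_{m-k} = \cdots = \alpha_{m-1} = 1$ and every unranked candidate score $\alpha_m = 0$, so it behaves like approval of the ranked set; ranking only $p$ yields $p$ its maximum $1$ and all others their minimum $0$, hence $(p)$ is $p$-optimal.

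The case needing care is plurality$_{\downarrow}$, and I expect verifying single-peakedness of the optimal ballot to be the main obstacle. Under round-down a ballot of length $k$ awards its top candidate $\alpha_{m-k}$, which for plurality equals $1$ only when $k = m-1$; so to give $p$ its maximum score the manipulator must rank exactly $m-1$ candidates with $p$ first. Since the remaining ranked positions (scores $\alpha_2,\dots,\alpha_{m-1}$) and the single omitted candidate (score $\alpha_m$) all receive $0$, such a ballot is still $p$-optimal: $p$ gets $1$ and every other candidate gets $0$. It remains to argue that a length-$(m-1)$ single-peaked ballot peaked at $p$ always exists. The $m-1$ ranked candidates must form a contiguous interval of $L$, i.e. we omit one endpoint of $L$; we simply omit an endpoint different from $p$ (if $p$ is itself an endpoint, we omit the opposite one). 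The resulting interval contains $p$, and ordering it single-peaked with peak $p$ places $p$ first, as required. Having produced an efficiently constructible $p$-optimal ballot for all four rules, the unified greedy procedure runs in polynomial time for any number of candidates, completing the argument.
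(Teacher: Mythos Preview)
Your proposal is correct and follows essentially the same approach as the paper: for $X_\uparrow$, plurality$_{av}$, and veto$_\downarrow$ the paper also has every manipulator cast $(p)$, and for plurality$_\downarrow$ it has every manipulator cast ``any single-peaked order consistent with $L$ that has $p$ at the top'' (implicitly a complete ballot, which also gives $p$ score $1$ and all others $0$). Your explicit domination argument via $p$-optimal ballots is a clean way to justify why these greedy choices suffice; the paper simply asserts the algorithm without spelling this out, and for plurality$_\downarrow$ your length-$(m-1)$ construction and the paper's complete ballot are interchangeable realizations of the same idea.
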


\begin{proof}
 For all the voting rules except plurality$_{\downarrow}$, the manipulators can simply check if all of them voting $(p)$ will make $p$ a winner. If not they cannot make $p$ a winner. In case of 
plurality$_{\downarrow}$, all the manipulators can vote for any single-peaked order consistent with the societal order $L$ that has $p$ at the top. \qed
\end{proof}

Another interesting point to note here is that Theorem \ref{SP-SC-D}, Theorem \ref{SP-SC-AV}, and Theorem \ref{SP-SC-U} together also imply that the restriction of preferences to being single-peaked 
has no effect on the complexity of manipulation with top-truncated ballots, since the same results were obtained by \citeauthor{menon} \cite{menon} in the general case as well. 

Next, we look at CWCM in Copeland$^{\alpha}$ and we present both the ``easiness'' and the ``hardness'' result. 
% Yet again what we see here is an increase in the complexity of manipulation when top-truncated ballots are allowed.

\begin{restatable}{theorem}{copeland}
 In single-peaked electorates, for 3-candidate Copeland$^{\alpha}$, $\alpha \in \mathbb{Q}$, $0 \leq \alpha < 1$,
 \begin{enumerate}
  \item CWCM with \textbf{complete votes} is in $P$.
  \item CWCM with \textbf{top-truncated votes} is NP-complete.
 \end{enumerate}
\end{restatable}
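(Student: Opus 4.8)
The plan is to prove the two parts separately. Membership of the top-truncated problem in NP is immediate: given a proposed assignment of ballots to the manipulators one can compute the Copeland$^{\alpha}$ winners and check whether $p$ is among them in polynomial time. For the easiness result I would exploit that single-peakedness over three candidates admits only a constant number of complete ballots, and for the hardness result I would reduce from Partition, using top-truncation precisely to decouple the three pairwise contests.

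For part 1, fix the candidates $\{a,b,p\}$; up to the symmetry of $L$ there are two cases, $p$ in the middle of $L$ and $p$ at an endpoint. In either case the single-peaked complete ballots number at most four, and the ballots placing $p$ first simultaneously maximize $p$'s margin in both contests $p$ vs $a$ and $p$ vs $b$, while the $a$ vs $b$ contest does not enter $p$'s Copeland score. I would therefore argue that an optimal manipulation, if one exists, has every manipulator cast $p$ first, so the algorithm need only try each of the $\le 4$ single-peaked orders with all manipulators voting identically and check whether $p$ wins -- a constant number of trials. The justification is a short case analysis on $p$'s resulting score: if voting $p$ first makes $p$ beat both rivals then $p$ scores $2$ and wins; if it makes $p$ beat one rival and tie or lose the other then, using $\alpha<1$, $p$ is a winner for a \emph{strict} outcome of the $a$ vs $b$ contest that a single identical ballot already realizes.

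The delicate configuration, and the main obstacle, is when casting $p$ first makes $p$ tie \emph{both} rivals; then $p$ scores $2\alpha$, and since $\alpha<1$ the candidate $p$ wins only if $a$ vs $b$ is an exact tie, which in general would demand a subset-sum-style split and threaten membership in P. Here the single-peakedness of the non-manipulators rescues us. When $p$ is in the middle I would show the margin equations forcing $p$ to tie both rivals are inconsistent with $S$ being single-peaked (they would require a negative total weight), so this configuration cannot arise; when $p$ is an endpoint there is a \emph{unique} $p$-first order, so the $a$ vs $b$ margin is completely determined and is simply read off and checked, again with no splitting. This is exactly where the structure of the fixed votes is used, and it is what keeps part 1 in P.

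For part 2, I would reduce from Partition on weights $\{k_i\}$ summing to $2K$, taking the societal order $p\,L\,a\,L\,b$ with $p$ at an endpoint and giving each manipulator weight $k_i$. The role of top-truncation is that the ballot $(p)$ lets a manipulator help $p$ against both rivals while staying neutral on $a$ vs $b$, whereas $(p,a)$ additionally pushes $a$ over $b$; with $p$ at the left endpoint these (together with the full ballot $p\succ a\succ b$) are the \emph{only} single-peaked ballots under which $p$ beats both rivals, and crucially no single-peaked ballot lets $p$ beat both while pushing $b$ over $a$. I would assemble $S$ from weighted truncated ballots $(a)$, $(b)$, and $(b,a)$ (weights $K,K,K$) so that the non-manipulator pairwise margins are exactly $\mu_{pa}=\mu_{pb}=-2K$ and $\mu_{ab}=-K$. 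Then any manipulator casting anything other than $(p)$ or $(p,a)$ drops a $p$-contest below a tie and lets the corresponding rival outscore $p$, so every manipulator is forced to cast one of these two ballots; this makes $p$ tie both rivals, giving $p$ score $2\alpha$, and the final $a$ vs $b$ margin is $-K+x$ where $x$ is the total weight voting $(p,a)$. Since $\alpha<1$, the candidate $p$ is a winner iff that margin is zero, i.e. iff $x=K$, i.e. iff some sub-multiset of $\{k_i\}$ sums to $K$ -- a Partition. The two points to check carefully are that the prescribed margins are realizable by genuinely single-peaked truncated ballots, which the choice $(a),(b),(b,a)$ achieves, and that placing $p$ at an endpoint removes the ``$b$ over $a$'' push, so that the neutral ballot $(p)$ cannot trivially balance the nonzero $\mu_{ab}$ and collapse the reduction.
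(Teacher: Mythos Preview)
Your proposal is correct. Part~1 follows the same case analysis as the paper (split on whether $p$ is central or at an endpoint of $L$, then argue that a constant number of identical $p$-first ballots suffice); your handling of the double-tie configuration via an impossibility argument is a slight variant of the paper's more direct treatment but lands in the same place.

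For Part~2 you take a genuinely different route. The paper places $p$ in the \emph{middle} ($a\,L\,p\,L\,b$) and reduces from Fixed-Difference Subset Sum: the non-manipulators are weight $6K$ on $(a)$, $6K$ on $(b)$, $2K$ on $(p,a,b)$, so every manipulator must rank $p$ first to make $p$ tie both rivals, after which the $a$--$b$ margin is $2K + 2x - 2y$ with $x,y$ the $k_i$-sums pushing $a$ resp.\ $b$; zeroing this is exactly a fixed-difference instance. You instead put $p$ at an \emph{endpoint} and reduce from ordinary Partition: because no single-peaked $p$-first ballot can push $b$ over $a$, the $a$--$b$ margin becomes $-K + x$ with $x$ the weight not bullet-voting $(p)$, and zeroing it is a partition. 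Your construction is a bit more elementary (standard Partition, fewer ballot types) and exploits the endpoint asymmetry nicely; the paper's construction reuses the central placement of $p$ that appears in its other reductions and illustrates that the hardness does not hinge on $p$ being at an endpoint. One small wording slip: you say manipulators are ``forced to cast one of these two ballots $(p)$ or $(p,a)$,'' but the full order $(p,a,b)$ is also an admissible $p$-first ballot; this is harmless since $(p,a)$ and $(p,a,b)$ are pairwise-equivalent, but you should say so explicitly.
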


\begin{proof}
 We first prove the polynomial time result for CWCM with complete votes and then show the NP-completeness of CWCM with top-truncated votes.
%  \begin{case}
 Let $a, b$, and $p$, be the three candidates. Without loss of generality, we need to consider only the following two linear orderings $L$:
 \begin{enumerate}[i)]
  \item $p L a L b$: In this case the only strategy for all the manipulators is to just vote $(p, a, b)$.
  \item $a L p L b$: In this case the linear ordering restricts the set of allowed votes to $\{(a, p, b), \allowbreak (p, a, b), (p, b, a), (b, p, a)\}$. As a result, it can be seen that regardless 
of the weights of the voters in $S$, $p$ cannot lose to both $a$ and $b$ in a pairwise election. Therefore, the only possibilities are $p$ losing to $a$ (respectively, $b$), but winning against $b$ 
($a$), or $p$ tying against both $a$ and $b$. In the latter scenario all the manipulators can either vote $(p, a, b)$ or $(p, b, a)$, while in the former all of them can vote $(p, b, a)$ 
(respectively, $(p, a, b)$). 
\end{enumerate}  
% \end{case}
 
%  \begin{case}
  It is easy to see that the problem is in NP. To be prove NP-hardness, we reduce from an arbitrary instance of the Fixed-difference Subset Sum problem.
  
   Consider the following instance of CWCM, where $a, b,$ and $p$ are the three candidates. Let $a L p L b$ be the linear ordering over the candidates. This linear ordering in turn restricts the set 
of allowed votes to $\{(a, p, b), (p, a, b), (p, b, a), \allowbreak (b, p, a), (a), (b), (p)\}$. In $S$, let there be a vote of weight $6K$ voting for $(a)$ and $(b)$ each, and a vote of weight $2K$ 
voting for $(p, a, b)$. In $T$, let each $k_i$ have a vote of weight $2k_i$.

Suppose there exists $S_1, S_2$ such that $\sum S_1 - \sum S_2 = K$. In Copeland$^{\alpha}$, it can be assumed that all the manipulators rank $p$ first. So, let the manipulators in $S_1$ vote $(p, 
b, a)$, those in $S_2$ vote $(p, a, b)$, and let the rest vote $(p)$. If $N_{V}(r, s)$ denotes the total number of votes in $V$ which rank $r$ prior to $s$, and $D_{V}(r, s) = N_{V}(r, s) - 
N_{V}(s, r)$, then $D_{S \cup T}(p, a) = 0$ and $D_{S \cup T}(p, b) = 0$. Therefore, the score of $p$, $s(p) = 2\alpha$. Also since $\sum S_1 - \sum S_2 = K$,  $D_{T}(a, b) = -2K$, while $D_S(a,b) = 
2K$. Therefore, $D_{S \cup T}(a, b) = 0$ and so, both receive a score $2\alpha$. As a result, $p$ wins by tie-breaking.

Conversely, suppose there exists a successful manipulation in favor of $p$. If $x, y$, and $z$, denote the sum of $k_i$'s of the manipulators in $T$ who vote $(p, a, b), \allowbreak (p, b, a),$ and 
$(p)$, respectively, then without taking into account the pairwise election between $a$ and $b$ in $T$, the score of $p$, $a$, and $b$ is $2\alpha, 1 + \alpha,$ and $\alpha$, respectively. Now since 
$2\alpha < 1 + \alpha$ for all rational $\alpha \in [0, 1)$, therefore, the only way $p$ would win this is if including the pairwise election between $a$ and $b$ in $T$ results in a tie between them. 
So this implies that $D_{S \cup T}(a, b) = 2K + 2x - 2y = 0$, or $y - x = K$. \qed
% \end{case}
\end{proof}

% We note that the polynomial-time algorithm in the previous result also works for the case of $\alpha = 1$, and that for CWCM with top-truncated votes it can be shown to be in $P$ when $\alpha = 1$. 
 
We note that for $\alpha = 1$ both CWCM with complete votes and top-truncated votes can be shown to be in $P$.

Our final result for manipulation under single-peaked preferences is the very interesting case of eliminate(veto). 
% Again, like in the case of Copeland$^\alpha$, we first show the ``easiness'' result and then show the ``hardness'' result. 

\begin{restatable}{theorem}{eVetoSP}
\label{eVeto-SP}
  In single-peaked electorates, in the unique winner model, for eliminate(veto), 
  \begin{enumerate}
   \item CWCM with \textbf{complete votes} is in $P$ when the number of candidates is bounded.
   \item CWCM with \textbf{top-truncated votes} is NP-complete for even three candidates.
  \end{enumerate}
\end{restatable}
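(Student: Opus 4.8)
The plan is to base both parts on one structural fact: under \emph{complete} single-peaked votes, the candidate vetoed by any voter (the one placed last, i.e.\ the one receiving $\alpha_m=0$ in veto) is always an \emph{endpoint} of the current candidate set on $L$, since the least-preferred remaining candidate of a single-peaked voter is the one farthest from the peak. Hence in eliminate(veto) only an endpoint ever accumulates vetoes and only an endpoint is ever eliminated; moreover the endpoint eliminated in a round is exactly the pairwise loser of the two current endpoints (a voter vetoes $c_l$ precisely when it ranks $c_r$ above $c_l$). For Part~1 this means $p$ is safe while it is interior and can only be threatened once one side of $L$ has been exhausted, after which $p$ must win a succession of weighted pairwise comparisons against the opposite-side candidates. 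With the number of candidates bounded there are only constantly many elimination sequences keeping $p$ alive, and for each I would test, in polynomial time, whether the manipulators (who may rank $p$ first and support it in every comparison) can force the required sequence of endpoint pairwise outcomes; this yields membership in~$P$.

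For Part~2, membership in NP is immediate. For hardness I would put $p$ at an endpoint, with societal order $a\,L\,b\,L\,p$, so that $b$ is the middle candidate and $a$ the far one. The crucial observation is that the ``perfect'' manipulator ballot $(p,a,b)$ --- which would veto the middle candidate $b$ in round one (helping eliminate it) while still ranking $p$ above $a$ (helping $p$ in the runoff against $a$) --- is \emph{not} single-peaked with respect to $L$, even though it is perfectly legal in the unrestricted setting. This is exactly what makes manipulating eliminate(veto) with truncated ballots easy in general but hard under single-peakedness: to push $b$'s veto strictly above $a$'s the manipulators are forced to choose between the sacrificial ballot $(a)$, which vetoes $b$ \emph{and} $p$, and the clean ballot $(p)$, which vetoes $a$ and $b$ equally and therefore cannot make $b$ strictly more vetoed than $a$.

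From a Partition instance with items summing to $2K$ I would build an election on $\{a,b,p\}$ in which each manipulator corresponds to an item of equal weight and chooses between voting $(a)$ and voting $(p)$, while the non-manipulator ballots are a calibrated mix of $(b)$, $(b,p,a)$, and $(a)$ votes that fix the first-round veto totals of $a$, $b$, and $p$. Writing $w$ for the total weight cast as $(a)$, the design would force: (i) $b$ is eliminated first only when $w$ exceeds a lower threshold (the $(a)$-ballots are what raise $\mathrm{veto}(b)$ above $\mathrm{veto}(a)$); and (ii) $p$ both survives round one and then strictly beats $a$ in the runoff only when $w$ stays below an upper threshold (since $(a)$-ballots also veto $p$ in round one and, being ignored only after $a$ is gone, continue to veto $p$ in the $(a,p)$ runoff). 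Choosing the constants so that these two thresholds pin $w=K$ exactly makes $p$ the unique winner iff the manipulators can place a subset of items summing to $K$ into the $(a)$-group, i.e.\ iff Partition is a yes-instance. The $(b,p,a)$ ballots are included so that when $w$ is too small and $a$ (not $b$) is eliminated, the persistent opposition they provide makes $p$ at best tie $b$ in the resulting runoff, denying $p$ a win under the unique-winner convention.

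The main obstacle I anticipate is the integer calibration in combination with the unique-winner requirement: I must guarantee a \emph{strict} win for $p$ at the balanced split $w=K$ while ensuring $p$ fails at every other split, and the delicate splits are those that produce an exact first-round tie between $a$ and $b$, where an adversarial tie-break could otherwise hand $p$ the favourable $(a,p)$ runoff. I expect to handle this in the standard way --- scaling all weights, or reducing instead from Partition$'$ or Fixed-Difference Subset Sum, whose extra slack eliminates the parity technicalities --- so that every $w\neq K$ yields an unambiguous first-round elimination and an unambiguous runoff outcome regardless of tie-breaking. Verifying that the three failure modes ($p$ eliminated in round one when $w$ is too large, $p$ losing the $(a,p)$ runoff, and $p$ failing the $(b,p)$ runoff when $w$ is too small) together cover all $w\neq K$ is the crux of the correctness proof.
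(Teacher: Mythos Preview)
Your proposal is correct and follows essentially the same approach as the paper on both parts: for Part~1 the paper proves your endpoint-elimination lemma, notes that the reverse of any valid elimination order is itself a single-peaked vote with $p$ on top, and enumerates these boundedly many orders (invoking the Coleman--Teague fact that in eliminate(veto) all manipulators may vote identically); for Part~2 the paper uses the same societal order $a\,L\,b\,L\,p$, the same Partition reduction, and the same $(a)$-versus-$(p)$ manipulator dichotomy, with the non-manipulator weights calibrated so that eliminating $a$ first forces an exact $b$--$p$ tie in round two---precisely the calibration obstacle you anticipated. Your diagnosis that the non-single-peakedness of the ballot $(p,a,b)$ is what separates the single-peaked case from the general one is exactly the mechanism driving the construction, though the paper does not state it explicitly.
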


\begin{proof}
We first present the polynomial time algorithm for CWCM with complete votes.

 Suppose there was an arbitrary set $W$ of the manipulators' votes which -- along with $S$ -- resulted in $p$ winning. Let the corresponding elimination order be $e = (c_1, \cdots, c_m = p)$. To 
prove this theorem, we need the following lemmas. 

\begin{lemma} \label{l-eV1}
 At each round of the eliminate(veto) only the rightmost or the leftmost candidate in the linear ordering $L$ is eliminated. 
\end{lemma}

\begin{proof}
Let $c_1 L c_2 L \cdots L c_m$ be any arbitrary linear ordering. It is clear that in the first round of eliminate(veto) only $c_1$ or $c_m$ will be eliminated since all the votes only have either of 
them placed at the end. Also, since the votes are single-peaked, for any candidate $c_j$, the only candidates which can be immediately on top of it in a vote where $c_j$ is placed last are its right 
and left neighbors, where the right neighbor of the rightmost candidate is the leftmost one and vice-versa. Therefore, in particular, if $c_1$ is the candidate eliminated in the first round then all 
its last votes gets transferred to either $c_2$ or $c_m$. Hence in the subsequent round only one of $c_2$ or $c_m$ will be eliminated. Continuing this way we can see that at every round $r$ only one 
of the corner candidates in $L$ (which now is the linear ordering over the remaining $m -r + 1$ candidates) will be eliminated.  \qed
\end{proof}

\begin{lemma} \label{l-eV2}
 The reverse of $e$ is a single-peaked order with respect to the given linear order $L$.
\end{lemma}
\begin{proof}
Our claim is that $O: c_m = p \succ c_{m-1} \succ \cdots \succ c_1$ is a valid single-peaked order with respect to $L$. To prove this, let us assume the contrary. This implies that there exists three 
candidates $c_i$, $c_j$, $c_k$ such that $c_j \succ_{O} c_i \succ_{O} c_k$, but according to the linear ordering $L$ it is only possible that $c_i L \cdots L c_k L \cdots L c_j$ or $c_j L \cdots L 
c_k L \cdots L c_i$. Now, $c_j \succ_{O} c_i \succ_{O} c_k$ implies that, among $c_i, c_j,$ and $c_k$, $c_k$ is eliminated first, followed by $c_i$, and then $c_j$. But then, since at any round $r$ 
only the corner elements can be eliminated (Lemma \ref{l-eV1}), $c_k$ can never be eliminated before both $c_i$ and $c_j$ in either of the two linear orderings. Hence $e$ cannot be a valid 
elimination 
order if it is not single-peaked. 
\qed
\end{proof}

\textbf{Proof of Theorem \ref{eVeto-SP} contd.} Now since the reverse of every elimination order $e$ caused by any arbitrary manipulation in favor of $p$ is single-peaked (Lemma \ref{l-eV2}), the 
manipulators can try out all possible single-peaked orders which have $p$ at the top and can try to induce $e$ by collectively voting for the same. This is enough because we know that in 
eliminate(veto) any elimination order that is achieved by an arbitrary set of manipulators' votes can be achieved if all the manipulators vote the same way (by using the reverse of the elimination 
order) \cite[Lemma 12]{coleman}. And since, when the number of candidates are bounded, there are only polynomial number of single-peaked orders which have $p$ at the top, this can be done in 
polynomial time.
% \end{case}

We now go on to show the hardness of CWCM with top-truncated votes for 3-candidate eliminate(veto). Clearly, the problem is in NP. To show NP-hardness, we reduce from an arbitrary instance of 
the Partition problem. 
 
  Consider the following instance of CWCM, where $a, b,$ and $p$ are the three candidates. Let $a L b L p$ be the linear ordering over the candidates. This linear ordering in turn restricts the set 
of allowed votes to $\{(a, b, p), (b, a, p), (b, p, a), \allowbreak (p, b, a), (a), (b), (p)\}$. In $S$, let there be a vote of weight $K + 2$ voting for $(b, p, a)$, vote of weight $K - 1$ voting 
for $(b, a, p)$, a vote of weight $1$ voting for $(a, b, p)$, a vote of weight 2 voting $(p)$, and a vote of weight 3 voting $(a)$. In $T$, let each $k_i$ have a vote of weight $k_i$.

The first thing to observe is, $p$ cannot be unique winner if $a$ is eliminated in the first round. This is because once $a$ is eliminated, $score(b) = score(p) = 2K + 2$. Therefore, for $p$ to be a 
unique winner, $b$ has to be eliminated in the first round. This in turn implies that the total score of $a$ and $p$ should be at least 1 more than that of $b$ (since we are in the unique winner 
model). So we have, 
\begin{align*}
 &score_{T}(a) + K + 3 \geq 2K + 3 + score_T(b)\implies score_{T}(a) - score_{T}(b) \geq K \\
 &score_{T}(p) + K + 4 \geq 2K + 3 + score_T(b)\implies score_{T}(p) - score_{T}(b) \geq K-1.
\end{align*}
where $score_{T}(p)$ is the score $p$ receives from the votes in $T$. 
 
 Now, it is possible to show that the above equations can be satisfied only if $b$ receives a score of 0 from $T$. Therefore, this implies that all the votes are concentrated on $(p)$ and 
$(a)$. As a result, given the equations above, we basically have two possibilities: a total weight of $K+1$ voting for $(a)$, $K-1$ voting for $(p)$, or a total weight of $K$ voting for $(a)$ and 
$(p)$ each. In the first case it is easy to see that $p$ will be eliminated in the second round, while in the latter case $p$ will be a unique winner. Hence $p$ can be a unique winner if and only if 
there exists a partition. \qed
% \end{case} 
\end{proof}

Theorem \ref{eVeto-SP} is most interesting not because of the fact that it follows the theme of our paper, but for the following other reasons. First is the very unusual behavior that it is showing 
here. Eliminate(veto), when there are only a bounded number of candidates and in the unique winner model, is known to be in $P$ for practically everything -- from CWCM with complete votes in the 
general case \cite{coleman}, to CWCM with top-truncated votes in the general case \cite{menon}, and to even when there is only partial information (in the form of top-truncated votes) on the 
non-manipulators' votes \cite{menon}. However, here, with single-peaked preferences and with top-truncated votes, it is NP-complete even when there are only three candidates. Second, what makes 
Theorem \ref{eVeto-SP} even more interesting is the fact that this actually serves as a counterexample (with the only caveat that, to be fair, they had considered only complete votes in that paper) 
to 
a conjecture stated by \citeauthor{faliszewski} \cite{faliszewski} where they say that they do not expect the complexity of manipulation for ``any existing, natural voting system'' to increase 
when moving from the general case (where there is no restriction on the preferences) to the single-peaked case. But this is exactly what we are seeing here. 
% In \cite{menon} it was shown that, in the unique winner model, CWCM with top-truncated votes is in $P$ for eliminate(veto), while the above theorem proves that in the same setting it is hard 
% for even  three candidates when the preferences are restricted to being single-peaked.   

\subsection{Nearly Single-Peaked Preferences}
For nearly single-peaked electorates, \citeauthor{faliszewski2014} \cite{faliszewski2014} were the first to look at the complexity of manipulation, bribery, and control. In that paper, they 
introduced several notions of nearness and among them was the $k$-maverick-SP-society where all but at most $k$ of the voters are consistent with the societal order $L$. As noted before, we only 
consider this notion of ``nearness'' in this paper. We start off by looking at 3-candidate scoring rules and we show the impact of top-truncated voting on CWCM. Note that Faliszewski et al. showed 
that for all 3-candidate scoring rules that are not isomorphic to plurality CWCM for 1-maverick-SP-societies was NP-complete \cite{faliszewski2014}. The NP-completeness proofs in this section are all 
based on the corresponding results for the case of single-peaked electorates in Section \ref{sec-SP}. 

% \begin{theorem} \label{NSP-SC-D}
%  In 1-maverick-SP societies, for any 3-candidate scoring rule $X$ that is not isomorphic to plurality or veto, CWCM with top-truncated votes in $X_{\downarrow}$ is NP-complete.
% \end{theorem}

\begin{restatable}{theorem}{nspSCd}
 \label{NSP-SC-D}
  In 1-maverick-SP societies, for any 3-candidate scoring rule $X$ that is not isomorphic to plurality or veto, CWCM with top-truncated votes in $X_{\downarrow}$ is NP-complete.
\end{restatable}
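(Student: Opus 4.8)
The plan is to reduce from Partition, following the same three-case split on the scoring vector used in Theorem~\ref{SP-SC-D} ($\alpha_1 < 2\alpha_2$, $\alpha_1 = 2\alpha_2$, $\alpha_1 > 2\alpha_2$, with $\alpha_3 = 0$), and to leverage the observation that a single-peaked electorate is in particular a $1$-maverick-SP electorate (with zero mavericks). Membership in NP is immediate: guess the manipulators' ballots, check that at most one of them is non-single-peaked with respect to $L$, and evaluate $X_{\downarrow}$. For the forward direction I would reuse the single-peaked manipulations constructed in Theorem~\ref{SP-SC-D} verbatim, since a fully single-peaked manipulation is a legal $1$-maverick manipulation; thus a partition still yields a win for $p$. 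The entire difficulty lies in the converse, where the manipulators now enjoy the extra freedom of casting one non-single-peaked (``maverick'') ballot, and I must argue that this added power does not manufacture a successful manipulation when no partition exists.

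To control the maverick I would classify ballots by the position of $p$ in $L$. With three candidates and round-down scoring, the only ballots awarding $p$ the top score $\alpha_1$ are $(p,a,b)$ and $(p,b,a)$, and a ballot is non-single-peaked precisely when it ranks the middle candidate of $L$ last. This yields two regimes. When $p$ is the middle candidate of $L$ (the $a\,L\,p\,L\,b$ orderings, as in the analogues of Cases~2 and~3), every maverick ballot ranks $p$ last, giving $p$ a score of $0$ while boosting an opponent; such a ballot is dominated, for the purpose of helping $p$, by the ballot $(p)$. Hence any successful $1$-maverick manipulation can be converted into a fully single-peaked one by replacing the maverick with $(p)$ without decreasing $p$'s margin over either opponent, and the argument of Theorem~\ref{SP-SC-D} then applies unchanged.

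The genuinely delicate regime is when $p$ sits at a corner of $L$ (the $p\,L\,a\,L\,b$ ordering of Case~1). Here there is exactly one maverick ballot ranking $p$ first --- the one placing the middle candidate last, namely $(p,b,a)$ --- and it can genuinely help: it transfers the second-place score $\alpha_2$ to the opponent of the manipulator's choosing while still awarding $p$ the top score $\alpha_1$, something no single-peaked ballot can do. In fact one checks that the original Case~1 gadget is broken by this freedom --- there are non-partition instances in which a single light maverick ballot equalizes all three scores and lets $p$ win on tie-breaking --- so I expect this corner-$p$ case to be the main obstacle. I would handle it by re-engineering the gadget, either relocating $p$ to the middle of $L$ (so that both score-transferring ballots $(p,a,b)$ and $(p,b,a)$ become single-peaked and no useful maverick survives) or padding $S$ with a block of fixed single-peaked ballots, chosen so that the equivalence with Partition is preserved. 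The key technical step is then to re-derive the two winning inequalities $\mathrm{score}(p) \ge \mathrm{score}(a)$ and $\mathrm{score}(p) \ge \mathrm{score}(b)$ with an explicit maverick-weight term included, and to verify that, since only one voter may deviate and the maverick therefore contributes at most the weight of a single Partition item, these inequalities are simultaneously satisfiable only when that term vanishes and the original balance condition encoding a partition holds. Establishing this robustness, rather than the routine scoring computations, is where the real work lies.
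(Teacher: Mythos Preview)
Your proposal and the paper take genuinely different routes, and the paper's is considerably cleaner. You treat the maverick slot as a resource available to the \emph{manipulators}, and then work to show that this extra freedom cannot rescue non-partition instances; this is sound for the middle-$p$ cases (your domination-by-$(p)$ argument is correct), but for Case~1 you only sketch a plan to ``re-engineer the gadget'' without actually carrying it out, and you correctly note that the original $p\,L\,a\,L\,b$ gadget is fragile against a single maverick manipulator ballot $(p,b,a)$.

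The paper sidesteps the whole difficulty with one move: it \emph{plants} the maverick in $S$. In each of the three cases it modifies the non-manipulator set so that exactly one vote in $S$ is not single-peaked with respect to $L$ (e.g., in Case~1 with $p\,L\,a\,L\,b$ the vote $(b,p,a)$ is the maverick), while arranging the scores induced by $S$ to match those in the single-peaked construction of Theorem~\ref{SP-SC-D}. Because the single allowed maverick slot is already consumed by a non-manipulator, every manipulator is forced to cast a single-peaked ballot, and the converse argument from Theorem~\ref{SP-SC-D} then applies verbatim. This trick eliminates the case analysis on maverick manipulator ballots entirely and avoids the unfinished Case~1 re-engineering in your proposal; the cost is only checking that the modified $S$ reproduces the same score triple, which is a direct computation.
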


\begin{proof}
 Proceed the same way as in Theorem \ref{NSP-SC-D} by considering three cases.
 
  \item{\textbf{Case 1: $\alpha_2 < \alpha_1 < 2\alpha_2$.}} Construct the following instance of CWCM with $pLaLb$ as the linear ordering and with $S$ having a vote of weight $K$ voting for $(a, b, 
p)$, $(b, p, a)$, $(a)$, and $(b)$ each. Note that here $(b, p, a)$ is the maverick. In $T$, let each $k_i$ have a vote of weight $k_i$. Use Partition for reduction. 
 
   \item{\textbf{Case 2: $\alpha_1 > 2\alpha_2$.}} Construct the following instance of CWCM with $aLpLb$ as the linear ordering and with $S$ having a vote of weight $(2\alpha_1 - \alpha_2)K$ voting 
for $(a, p, b)$, $(b, a, p)$ and $(b)$ each, and a vote of weight $2(2\alpha_1 - \alpha_2)K$ voting for $(p)$. Note that here $(b, a, p)$ is the maverick. In $T$, let each $k_i$ have a vote of weight 
$(\alpha_1 - 2\alpha_2)k_i$. Use Partition for reduction.  

   \item{\textbf{Case 3: $\alpha_1 = 2\alpha_2$.}} Construct the following instance of CWCM with $aLpLb$ as the linear ordering and with $S$ having a vote of weight $K$ voting for $(b, a, p)$ and 
$(b)$ each, and a vote of weight $2K$ voting for $(a)$. Note that here $(b, a, p)$ is the maverick. In $T$, let each $k_i$ have a vote of weight $k_i$. Use Partition for reduction.  
\qed 
\end{proof}

% % \begin{theorem} \label{NSP-SC-AV}
% %  In 1-maverick-SP societies, for any 3-candidate scoring rule $X$ that is not isomorphic to plurality, CWCM with top-truncated votes in $X_{av}$ is NP-complete.
% % \end{theorem}

\begin{restatable}{theorem}{nspSCav}
 \label{NSP-SC-AV}
 In 1-maverick-SP societies, for any 3-candidate scoring rule $X$ that is not isomorphic to plurality, CWCM with top-truncated votes in $X_{av}$ is NP-complete.
\end{restatable}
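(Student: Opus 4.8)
The plan is to mirror the reduction from Fixed-Difference Subset Sum used in Theorem~\ref{SP-SC-AV}, adapting it so that the constructed electorate is a genuine $1$-maverick-SP society rather than a fully single-peaked one, exactly as was done for the round-down scheme in Theorem~\ref{NSP-SC-D}. Membership in NP is immediate: given a guess for the manipulators' votes, the winner under $X_{av}$ can be computed in polynomial time. For hardness, I would again take the societal axis $a L p L b$, give each element $k_i$ of the Subset Sum instance a manipulator of weight $2k_i$, and restrict attention to the three useful manipulator ballots $(p,a,b)$, $(p,b,a)$, and $(p)$, whose per-unit contributions under the average scheme are $(p{:}\alpha_1, a{:}\alpha_2, b{:}0)$, $(p{:}\alpha_1, b{:}\alpha_2, a{:}0)$, and $(p{:}\alpha_1, a{:}\tfrac{\alpha_2}{2}, b{:}\tfrac{\alpha_2}{2})$, respectively.

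The one structural change from Theorem~\ref{SP-SC-AV} is to insert a single maverick ballot into $S$ — one of the two complete orders that are not single-peaked with respect to $a L p L b$, namely $(a,b,p)$ or $(b,a,p)$ — and to re-balance the remaining single-peaked weights in $S$ so that the base scores $B_a, B_b, B_p$ received by $a$, $b$, $p$ before the manipulators vote still satisfy the two identities $B_a = B_p + 4K\alpha_1 - 3K\alpha_2$ and $B_b = B_p + 4K\alpha_1 - K\alpha_2$. These are precisely the identities that make the analysis of Theorem~\ref{SP-SC-AV} go through: writing $x,y,z$ for the total $k_i$-weight of manipulators casting $(p,a,b)$, $(p,b,a)$, $(p)$ and using $x+y+z = 2K$, the requirement that $p$ tie $a$ collapses to $x - y \le K$ while the requirement that $p$ tie $b$ collapses to $x - y \ge K$, so a successful manipulation exists iff $x - y = K$, i.e.\ iff disjoint sets $S_1, S_2$ with $\sum S_1 - \sum S_2 = K$ exist. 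The forward direction (set $S_1 \mapsto (p,a,b)$, $S_2 \mapsto (p,b,a)$, the rest $\mapsto (p)$) then equalizes all three scores and $p$ wins on tie-breaking. Inserting the maverick serves the same double purpose as in Theorem~\ref{NSP-SC-D}: it certifies that the instance is $1$-maverick-SP, and it \emph{uses up} the single maverick allowance so that every manipulator is forced to cast a single-peaked ballot, keeping their effective options confined to the three ballots above.

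The step I expect to be the main obstacle — and the place where the average scheme behaves differently from round-down — is arguing that this maverick bookkeeping is harmless, i.e.\ that no manipulator would profit from a non-single-peaked ballot even if one were available. Here the average scheme is in fact friendly: with respect to $a L p L b$ the only non-single-peaked complete orders are $(a,b,p)$ and $(b,a,p)$, and every non-single-peaked (possibly truncated) ballot either ranks $p$ last or leaves $p$ unranked, hence assigns it score $0$ and cannot help $p$; moreover, among all ballots that do place $p$ first, the total score handed to $\{a,b\}$ is always exactly $\alpha_2$ per unit weight, so no ballot can simultaneously zero out both rivals. This is what ultimately makes the single maverick irrelevant to the manipulators and lets the single-peaked analysis survive verbatim. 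Verifying this, together with checking that a single-maverick choice of non-negative weights actually realizes the two base-score identities under the average scheme — the maverick contributes $\alpha_1$ to one rival, $\alpha_2$ to the other, and $0$ to $p$, which must be offset by adjusting the bullet-vote weights $(a)$, $(b)$, $(p)$ — is the only genuinely new computation; the two directions of the equivalence are then identical to those of Theorem~\ref{SP-SC-AV}. Note finally that the exclusion of plurality (and the inclusion of veto) is exactly what the argument needs, since it guarantees $\alpha_2 > \alpha_3 = 0$, which is used both to normalize scores and to divide through by $\alpha_2$ when collapsing the two inequalities to $x - y = K$.
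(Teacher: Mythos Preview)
Your plan is essentially the paper's proof. The paper uses the axis $aLpLb$, reduces from Fixed-Difference Subset Sum with manipulator weights $2k_i$, and places in $S$ a vote of weight $2K$ on each of $(a,b,p)$, $(b,p,a)$, $(a)$, and $(b)$, with $(a,b,p)$ the maverick; one checks directly that this realises exactly the two base-score identities $B_a = B_p + 4K\alpha_1 - 3K\alpha_2$ and $B_b = B_p + 4K\alpha_1 - K\alpha_2$ you wrote down, after which the $x-y\le K$ / $x-y\ge K$ argument is verbatim Theorem~\ref{SP-SC-AV}.

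One remark: the paragraph you flag as ``the main obstacle'' --- showing that a manipulator would not profit from a non-single-peaked ballot --- is not actually needed. Since the single maverick is already consumed by a vote in $S$, the $1$-maverick constraint \emph{forces} every manipulator ballot to be single-peaked with respect to $L$; there is nothing to argue about hypothetical maverick manipulator votes. (Your observation that $(p,a)$ and $(p,b)$ score identically to $(p,a,b)$ and $(p,b,a)$ under the average scheme is, however, the right way to reduce the manipulators' effective options to three.) The genuine residual work is only the bookkeeping you mention at the end: exhibiting non-negative weights that hit the two identities while containing exactly one maverick, which the paper's explicit $2K$--$2K$--$2K$--$2K$ choice accomplishes.
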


\begin{proof}
  Proceed the same way as in Theorem \ref{NSP-SC-AV}. Construct the following instance of CWCM with $aLpLb$ as the linear ordering and with $S$ having a vote of weight $2K$ voting for $(a, b, p)$, 
$(b, p, a)$, $(a)$, and $(b)$ each. Note that here $(a, b, p)$ is the maverick. In $T$, let each $k_i$ have a vote of weight $2k_i$. Use Fixed-difference Subset Sum for reduction. 
\qed  
\end{proof}

Finally, it is also easy to see that the following theorem holds based on the algorithm given in Theorem \ref{SP-SC-U}.

\begin{restatable}{theorem}{nspSCu}
 \label{NSP-SC-U}
 In $k$-maverick-SP societies, computing if a coalition of manipulators can manipulate $X_{\uparrow}$, for any 3-candidate scoring rule $X$, plurality$_{\downarrow}$, veto$_{\downarrow}$, 
plurality$_{av}$ with weighted top-truncated votes takes polynomial time (for any number of candidates).
\end{restatable}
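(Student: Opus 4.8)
The plan is to argue that the polynomial-time procedure behind Theorem~\ref{SP-SC-U} carries over unchanged to the $k$-maverick setting, because that procedure rests only on the manipulators possessing a \emph{dominant} ballot that is itself single-peaked, while the mavericks live entirely inside the fixed profile $S$, affecting the tallies but not the strategic choices available to the coalition.

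First I would isolate the per-rule dominance fact. For $X_{\uparrow}$, plurality$_{av}$, and veto$_{\downarrow}$, the singleton ballot $(p)$ is optimal among \emph{all} ballots: it hands $p$ the maximum per-vote contribution while driving every other candidate to its minimum. Under round-up an unranked candidate always scores $\alpha_m$, so $(p)$ gives $p$ the top score $\alpha_1$ and each rival $\alpha_m$; under the average scheme for plurality the unranked candidates split $\frac{\sum_{1<j\le m}\alpha_j}{m-1}=0$, so $(p)$ gives $p$ the point and every rival $0$; and under round-down for veto a lone top candidate receives $\alpha_{m-1}=1$ while every unranked candidate receives $\alpha_m=0$, again giving $p$ its maximum and all rivals $0$. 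Plurality$_{\downarrow}$ is the lone exception, since there $(p)$ yields $p$ the useless score $\alpha_{m-1}=0$; for it the dominant ballot is any complete single-peaked order with $p$ on top, which awards $p$ the full plurality point and every rival $0$, and such an order always exists (place $p$ at the peak and sort the rest by distance from $p$ along $L$).

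Next I would note that each of these dominant ballots is single-peaked with respect to $L$ --- trivially for a singleton, and by construction for the $p$-on-top order. Hence having every manipulator cast it never creates a new non-single-peaked voter, so the electorate stays inside the $k$-maverick class irrespective of how many of the at most $k$ mavericks already occur in $S$. Conversely, since the dominant ballot is optimal over all ballots and not merely the single-peaked ones, even granting the coalition the freedom to spend part of the maverick budget on non-single-peaked votes buys it nothing. The mavericks therefore influence only the constant contribution of $S$ to the final scores.

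The algorithm is then identical to that of Theorem~\ref{SP-SC-U}: let all manipulators cast the dominant ballot, tally the resulting scores against the (possibly maverick-containing) profile $S$, and accept iff $p$ is a winner; by dominance, $p$ can be made a winner by some joint manipulation exactly when it can by this one, and both the tally and the winner determination are polynomial. The only step demanding care is the per-scheme dominance verification --- confirming that truncation under round-up, the average scheme, and veto round-down really does simultaneously pin $p$ at its maximum and every rival at its minimum. Once that monotonicity is in hand the maverick generalization is immediate, precisely because the correctness of the check never invoked single-peakedness of $S$ in the first place.
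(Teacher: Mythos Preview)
Your proposal is correct and takes essentially the same approach as the paper: both simply invoke the algorithm of Theorem~\ref{SP-SC-U}, since the dominant ballot (the singleton $(p)$, or for plurality$_{\downarrow}$ a complete $p$-topped single-peaked order) is optimal over all ballots and is itself single-peaked, so the presence of mavericks in $S$ is irrelevant. The paper in fact offers no proof beyond the one-line remark that the result ``holds based on the algorithm given in Theorem~\ref{SP-SC-U}''; your version is a strictly more detailed rendering of the same idea, with the per-scheme dominance checks and the observation about maverick freedom for manipulators made explicit.
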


Next, we look at eliminate(veto) and we show how top-truncated voting increases the complexity of manipulation for eliminate(veto) in 1-maverick-SP electorates and that it continues to portray the 
unusual behavior noted earlier.

\begin{restatable}{theorem}{eVetoNSP}
\label{eVeto-NSP}
  In 1-maverick-SP electorates, in the unique winner model, for eliminate(veto), 
  \begin{enumerate}
   \item CWCM with \textbf{complete votes} is in $P$ when the number of candidates is bounded.
   \item CWCM with \textbf{top-truncated votes} is NP-complete for even three candidates.
  \end{enumerate}
\end{restatable}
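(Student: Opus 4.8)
The plan is to reproduce the two-part structure of Theorem~\ref{eVeto-SP}, pushing each argument through the single-maverick perturbation exactly as Theorems~\ref{NSP-SC-D} and~\ref{NSP-SC-AV} adapted their single-peaked counterparts.

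For the polynomial-time claim (complete votes, bounded candidates), I would reuse the elimination-order analysis behind Theorem~\ref{eVeto-SP}, but first re-establish a maverick-aware version of Lemma~\ref{l-eV1}. Every single-peaked vote -- that is, all of $S$ except the maverick, together with the (single-peaked) manipulators -- places only a current corner candidate of $L$ last, so a non-corner candidate can receive veto weight only from the maverick. Hence at each round the eliminated candidate is one of the two current corners of $L$ or the maverick's current veto target: at most three choices per round. With a bounded number of candidates this leaves only a constant number ($\leq 3^{m-1}$) of achievable elimination orders, and the algorithm enumerates those that end in $p$ and tests each for realizability by the manipulators, whose per-round veto target is pinned down once the order is fixed.

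The delicate point, and the main obstacle, is that the lone maverick destroys Lemma~\ref{l-eV2}: the reverse of a successful elimination order need no longer be single-peaked, so the clean ``all manipulators cast $\mathrm{reverse}(e)$'' reduction of Theorem~\ref{eVeto-SP} -- which relied on that reverse being a \emph{legal} single-peaked ballot, via \cite[Lemma 12]{coleman} -- is no longer directly available. I would resolve this per fixed target order $e$: since each single-peaked ballot type's round-by-round veto target is determined by $e$, and since for bounded $m$ there are only constantly many such orders and constantly many single-peaked ballot types, deciding whether the given manipulator weights can be distributed so that the designated candidate is the one eliminated at every round is a polynomial-time feasibility test. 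Care is needed because the per-round constraints are coupled through the fact that a single ballot vetoes across all rounds, but the constant number of ballot types keeps this within polynomial time.

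For the NP-completeness claim (top-truncated votes, three candidates), membership in NP is immediate, and for hardness I would take the Partition reduction from the second half of Theorem~\ref{eVeto-SP} -- societal order $a\,L\,b\,L\,p$ and the same manipulator gadget in which $k_i$ contributes weight $k_i$ -- and retune the profile $S$ so that exactly one of its votes is a maverick, namely one of the two non-single-peaked orders $(a,p,b)$ or $(p,a,b)$, in the style of Theorems~\ref{NSP-SC-D} and~\ref{NSP-SC-AV}. The verification recomputes every veto-score including the maverick's contribution and must still force (i) that $p$ cannot be the unique winner unless $b$ is eliminated in the first round, and (ii) that $b$ can be eliminated first only when the manipulator weight is split evenly between $(a)$ and $(p)$, i.e.\ exactly when a balanced partition exists. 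The only real work -- and the only real risk -- is selecting the maverick's identity and weight and compensating the weights of the remaining votes of $S$ so that both forcing arguments survive the perturbation; this is the same tuning exercise already carried out for the scoring-rule nearly-single-peaked reductions, so I expect it to go through with a small constant adjustment to the weights used in Theorem~\ref{eVeto-SP}.
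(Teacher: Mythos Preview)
Your Part~2 plan is fine and is exactly what the paper does: it keeps the societal order $a\,L\,b\,L\,p$ and the Partition gadget of Theorem~\ref{eVeto-SP}, inserts one weight-$1$ maverick voting $(p,a,b)$, and rebalances the remaining weights in $S$ by a small constant so that the two forcing arguments go through unchanged.

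Part~1, however, has a real gap. Your enumeration of elimination orders is sound (at most three candidates per round, so constantly many orders for bounded $m$), but the step ``distribute the given manipulator weights across the constantly many single-peaked ballot types so that the fixed elimination order is realized'' is \emph{not} a polynomial-time feasibility test just because the number of ballot types is constant. The constraints depend only on the $k$-tuple of ballot-type weight totals, but those totals must be achievable as a partition of the manipulators' individual weights into $k$ groups; already for $k=2$ this subsumes subset-sum. So ``constant number of ballot types'' buys you a small LP, not a polynomial-time partition check.

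The paper sidesteps this entirely with a structural observation you are missing. Suppose some round $i$ eliminates a non-corner candidate $c_i$. Only the maverick can place a non-corner last, so the maverick's weight alone exceeds every other candidate's last-vote total at round~$i$; from round $i{+}1$ on the maverick is effectively a dictator, and $p$ can win only if the maverick already has $p$ on top. Hence the manipulators' votes matter only in rounds $1,\ldots,i{-}1$, all of which are corner eliminations, and there is a single-peaked order with $p$ on top whose last $i{-}1$ positions are exactly $c_1,\ldots,c_{i-1}$. Combined with \cite[Lemma~12]{coleman}, this gives: whenever any manipulation succeeds, \emph{all manipulators voting one common single-peaked order with $p$ on top} also succeeds. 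So the algorithm is simply to try each such order and simulate --- no weight-distribution problem arises at all.
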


\begin{proof}
We first show how CWCM with complete votes is in $P$. To do the same, we will use the following lemma.

\begin{lemma} \label{l-eV3}
 Any successful manipulation for $p$ caused by an arbitrary set of votes of the manipulators can be achieved if all the manipulators vote for one of the single-peaked orders (that is consistent with 
the input linear ordering $L$) that has $p$ at the top. 
\end{lemma}

\begin{proof}
Suppose there was an arbitrary set of votes $W$ which resulted in a successful manipulation in favor of $p$. Let the corresponding elimination order be $e = (c_1, \cdots, c_m = p)$, where $c_i$ is 
the candidate eliminated in round $i$. Now, if the electorate were perfectly single-peaked then from Lemma \ref{l-eV1} we know that at each round only on of the rightmost or the leftmost candidate in 
the linear ordering $L$ will be eliminated. But here we have exactly 1 maverick and so that claim is not always true. As a result, we have the following cases to consider.

\item{\textbf{Case 1:}} The first case is when at each round only one of the candidates at the ends in $L$ is eliminated. This case is identical to the perfectly single-peaked case and from 
Lemma \ref{l-eV2} we know that the reverse of the elimination order here will be single-peaked. 

\item{\textbf{Case 2:}} Suppose there was some round $i$ in which there was a candidate $c_i$ eliminated such that $c_i$ was neither the leftmost nor the rightmost candidate in the $L$ remaining 
after $i-1$ rounds. If $LV(c_j)$ denotes the number of last votes for $c_j$, then the fact that $c_i$ was eliminated in the $i$th round implies that $LV(c_i) > LV(c_j), \forall j \neq i$. Now, since 
$c_i$ was not the rightmost or leftmost candidate in $L$, the only way $c_i$ can have last votes is if, in the $i$th round, the vote of the maverick had $c_i$ at the end. But then, this also implies 
that from the $(i+1)$th round, the maverick is practically a dictator as for every round from $i$ only the candidate placed last (for that round) in the maverick's vote will be eliminated. Therefore, 
in this case, the only way $p$ can win is if it is placed at the top by the maverick. As a result, what we see here is that, if at all the manipulators' votes have any significance (it wouldn't 
matter how they vote if the first candidate to be eliminated in the election was a candidate not present at the either of the ends in $L$) in determining if $p$ wins, that significance holds only for 
the first $i-1$ rounds. Also, since the first $i-1$ rounds only had one of the candidates at the ends in $L$ eliminated, we know that there is at least one single-peaked order that has $p$ at the 
top and $\{c_1, \cdots, c_{i-1}\}$ in the last $i-1$ places, where $c_j$ is the candidate placed at the $j$th position from the bottom in the preference order. 

In both the cases above, if the appropriate single-peaked order (which is the reverse of the elimination order in case 1, and a single-peaked order with $p$ at the top and $\{c_1, \cdots, c_{i-1}\}$ 
in the last $i-1$ places for case 2) is determined, then all the manipulators can simply vote for that order and this in turn will result in a successful manipulation in favor of $p$. Doing this is 
enough because we know that in eliminate(veto) any elimination order that is achieved by an arbitrary set of manipulators' votes can be achieved if all the manipulators vote the same way (by using 
the reverse of the elimination order) \cite[Lemma 12]{coleman}.\qed
\end{proof}

\textbf{Proof of Theorem \ref{eVeto-NSP} contd.} As a consequence of Lemma \ref{l-eV3}, the manipulators can now simply try out all possible single-peaked orders (that are consistent with $L$) 
which have $p$ at the top. Since there are only a polynomial number of such orders (when the number of candidates are bounded), this can be done in polynomial time. 

To prove the second part of the theorem, we can proceed the same way as in the second part of Theorem \ref{eVeto-SP}. Construct the following instance of CWCM with $aLbLp$ as the linear ordering and 
with $S$ having a vote of weight $K + 2$ voting $(b, p, a)$, a vote of weight $K - 1$ voting $(b, a, p)$, a vote of weight $1$ voting $(a, b, p)$, a vote of weight 1 voting $(p)$, a vote of weight 1 
voting $(p, a, b)$, and a vote of weight 2 voting $(a)$. Note that, here, $(p, a, b)$ is the maverick. In $T$, let each $k_i$ have a vote of weight $k_i$. Use Partition for reduction. \qed
\end{proof}

\section{Bribery} \label{bribery}
\citeauthor{faliszewski2009} \cite{faliszewski2009} were the first to look at the complexity of bribery in elections. Subsequently, the problem was studied by \citeauthor{brandt2010} 
\cite{brandt2010} in single-peaked settings and there they showed that many of the combinatorial protections for bribery vanish when the preferences are restricted to being single-peaked. 
Finally, \citeauthor{faliszewski2014} \cite{faliszewski2014} also studied the problem when the preferences are nearly single-peaked. Here, we revisit the problem of bribery in single-peaked and 
nearly-single peaked settings and we try and see if bribery too, like manipulation, fits into our theme of reinstating combinatorial protections in single-peaked and nearly single-peaked elections 
through top-truncated voting.

\subsection{Weighted-Bribery in Scoring Rules}
Here we first derive the results for 3-candidate scoring rules in single-peaked settings when only complete votes are allowed. Subsequently, we do the same when top-truncated ballots are allowed. 
The NP-completeness proofs here use an idea that is similar to the one used by \citeauthor{faliszewski2009} in Theorem 4.9 \cite{faliszewski2009}, where they use a reduction from a modified version 
of the weighted manipulation problem to show that $\alpha$-weighted-bribery is NP-complete when it isn't the case that $\alpha_2 = \alpha_3 = \cdots = \alpha_m$.  Although it is possible to extend the 
results of the complete votes case here to any $m$-candidate scoring rule, thanks to the complete characterization of weighted manipulation for scoring rules given by \citeauthor{brandt2010} 
\cite{brandt2010}, we stick to the 3-candidate case here because, to the best our knowledge, such a complete characterization is not yet known when top-truncated ballots are allowed.   

Let us first define the modified version of manipulation that we will use to reduce to the problem of weighted-bribery. The modified problem defined here is similar to the one used by 
\citeauthor{faliszewski2009} \cite{faliszewski2009}, with the only difference that in their problem all the manipulators need to have weights at least twice as much as the weight of the heaviest 
non-manipulator, while in our case we require that all the manipulators need to have weights at least thrice as much as the weight of the heaviest non-manipulator.

\begin{definition}[CWCM']
CWCM' is the same problem as CWCM with the restriction that each manipulative voter has a weight at least thrice as much as the weight of the heaviest non-manipulator. 
\end{definition}

% Next, we show that for all 3-candidate scoring rules with $(\alpha_1 - \alpha_3) > 2(\alpha_2 - \alpha_3)$, CWCM' is NP-complete. The proof here makes use of the corresponding result for CWCM given 
% in \cite[Theorem 4.4]{faliszewski}.
Next, we show that for all 3-candidate scoring rules with $(\alpha_1 - \alpha_3) > 2(\alpha_2 - \alpha_3)$, CWCM' is NP-complete. The proof here makes use of the corresponding result for CWCM given 
by \citeauthor{faliszewski} in Theorem 4.4 \cite{faliszewski}.

% \begin{lemma}\label{SP-CWCM'}
%  In single-peaked electorates, CWCM' with complete votes is NP-complete for 3-candidate scoring rules when $(\alpha_1 - \alpha_3) > 2(\alpha_2 - \alpha_3)$. 
% \end{lemma}

% \begin{restatable}{lemma}{spCWCM}
% \label{SP-CWCM'}
%  In single-peaked electorates, CWCM' with complete votes is NP-complete for 3-candidate scoring rules when $(\alpha_1 - \alpha_3) > 2(\alpha_2 - \alpha_3)$. 
% \end{restatable}

\begin{restatable}{theorem}{spCWCM}
\label{SP-CWCM'}
 In single-peaked electorates, CWCM' with complete votes is NP-complete for 3-candidate scoring rules when $(\alpha_1 - \alpha_3) > 2(\alpha_2 - \alpha_3)$. 
\end{restatable}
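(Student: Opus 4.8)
NP membership is immediate: a witness is an assignment of complete single-peaked votes to the manipulators, and verifying that $p$ becomes a winner under the scoring rule takes polynomial time even though the weights are encoded in binary. For hardness my plan is to reduce from Partition$'$ (NP-complete by the lemma stated above), reusing the single-peaked CWCM gadget behind Faliszewski et al.'s Theorem 4.4 \cite{faliszewski} but feeding it a Partition$'$ instance and then repairing the weights so that the manipulators become heavy. After normalizing the scoring vector so that $\alpha_3 = 0$, the hypothesis $(\alpha_1 - \alpha_3) > 2(\alpha_2 - \alpha_3)$ is exactly $\alpha_1 > 2\alpha_2$, which is the regime in which that gadget is hard.

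The gadget I would use takes societal order $a\,L\,p\,L\,b$, so that $p$ is the median candidate (the genuinely hard placement) and the only single-peaked votes ranking $p$ first are $(p,a,b)$ and $(p,b,a)$. I would first observe that it is \emph{without loss of generality} that every manipulator ranks $p$ first: moving $p$ to the top of a manipulator's vote only raises $p$'s score and lowers an opponent's, so it never destroys a successful manipulation. Hence each manipulator merely chooses whether its second-place $\alpha_2$ points go to $a$ or to $b$. Giving manipulator $i$ weight $a_i$ (the $i$th Partition$'$ number, with $\sum_i a_i = 2nK$) and installing two non-manipulator blocks $(a,p,b)$ and $(b,p,a)$ of equal weight $w = nK(2\alpha_1-\alpha_2)/(\alpha_1-2\alpha_2)$ makes $\mathrm{base}_a=\mathrm{base}_b$, so that once all manipulators rank $p$ first, $p$ ties both opponents — and thus is a winner in the non-unique-winner model — precisely when the manipulators split their total weight $2nK$ into two halves of $nK$ each, i.e.\ exactly when the Partition$'$ instance is a yes-instance.

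The crux, and the reason Partition$'$ rather than Partition is needed, is the CWCM$'$ weight constraint. Partition$'$ guarantees $a_i \ge K$ for every manipulator, while the two non-manipulator blocks have total weight $\Theta(nK)$. I would therefore split each block into $\Theta(n)$ identical copies, each casting the block's vote and each of weight at most $\lfloor K/3\rfloor$; this leaves every candidate's score unchanged, it is polynomial because the factor $K$ cancels in $w/\lfloor K/3\rfloor = \Theta(n)$, and afterwards every manipulator weighs $a_i \ge K \ge 3\lfloor K/3\rfloor$, i.e.\ at least thrice the heaviest non-manipulator, as CWCM$'$ demands. Minor arithmetic points — clearing the denominator $\alpha_1 - 2\alpha_2$ and ensuring $\lfloor K/3\rfloor \ge 1$ — are handled by an initial uniform scaling of the instance, which disturbs neither the reduction nor the weight inequality.

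The step I expect to be the real obstacle is precisely meeting this ``thrice as heavy'' requirement, and it is worth flagging why the obvious black-box route fails: one cannot simply take an arbitrary CWCM-hard instance and add a large uniform weight $D$ to each manipulator, because each manipulator's extra $D\alpha_2$ second-place points land on $a$ or on $b$ according to that manipulator's own choice, so the resulting condition involves the uncontrolled count $|S_1|$ and no longer encodes a clean partition, while a single fixed non-manipulator vote cannot supply choice-independent compensation. Reducing from Partition$'$ sidesteps this entirely: I never inflate the manipulators, I shrink the non-manipulators by splitting, and it is exactly the $a_i \ge K$ guarantee that keeps the ratio (total non-manipulator weight)/(minimum manipulator weight) polynomial, so that the splitting stays within polynomial time.
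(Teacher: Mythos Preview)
Your proposal is correct and takes essentially the same approach as the paper: reduce from Partition$'$ via the Faliszewski et al.\ single-peaked gadget on axis $a\,L\,p\,L\,b$, then split the non-manipulator blocks into polynomially many lighter copies so that the Partition$'$ guarantee $a_i\ge K$ enforces the thrice-as-heavy constraint. The only differences are cosmetic bookkeeping --- the paper premultiplies all weights by $3$ and presents the non-manipulators as $2n$ voters of weight $3(2\alpha_1-\alpha_2)K$ each of which is then split into a constant number of pieces, whereas you keep two large blocks and split each directly into $\Theta(n)$ pieces of weight at most $\lfloor K/3\rfloor$; the resulting instances and the correctness argument are the same.
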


\begin{proof}
 To prove the above theorem for the complete votes case, we make use of Faliszewski et al.'s proof of Theorem 4.4 \cite{faliszewski} which shows how CWCM is NP-complete for 3-candidate scoring rules 
when $(\alpha_1 - \alpha_3) > 2(\alpha_2 - \alpha_3)$. The first observation to make here is that the reduction in \cite[Theorem 4.4]{faliszewski} works even if we use Partition'. So, to prove the 
NP-completeness of CWCM', the only thing we need to do is to make sure that the manipulators used in the reduction have weights at least thrice as that of the heaviest non-manipulator. This in turn 
can be ensured if we can somehow replace every non-manipulator who has a weight greater than the threshold (which here is one-third the weight of the lightest manipulator) with several 
non-manipulators each of whose weight is less than the threshold. 

In the proof of \cite[Theorem 4.4]{faliszewski}, if we use Partition' for reduction then there will be $2n$ manipulators (instead of just 2 in the original proof) each of weight $3(2\alpha_1 - 
\alpha_2)K$ such that $n$ of them vote for $(a, p, b)$ and the other $n$ vote for $(b, p, a)$. Additionally, each of manipulators will now have a weight of $3(\alpha_1 - 2\alpha_2)a_i$, where $\{a_1, 
\cdots, a_n\}$ is an arbitrary instance of Partition' with $a_i \geq K, \forall i$ and $\sum_i a_i = 2nK$ (the rationale behind multiplying an additional factor of 3 to the weights of each 
manipulator and non-manipulator will be evident below). But now, since we need the weights of the manipulators to be at least thrice as much as the weight of the heaviest non-manipulator, we will 
have to replace each non-manipulator by at most 
\begin{equation*}
\left \lceil \frac{3(2\alpha_1 - \alpha_2)K}{\left\lfloor\frac{1}{3} (3(\alpha_1 - 2\alpha_2)) a_{\min}\right\rfloor} \right \rceil
\end{equation*}
non-manipulators, where $a_{\min} = \argmin_i\{a_i\}$. Since $a_{\min} \geq K$, and $(\alpha_1 - 2\alpha_2) \geq 1$, we have, 
\begin{equation*}
\left \lceil \frac{3(2\alpha_1 - \alpha_2)K}{\left\lfloor\frac{1}{3} (3(\alpha_1 - 2\alpha_2)) a_{\min}\right\rfloor} \right \rceil  \leq \left \lceil \frac{3(2\alpha_1 - \alpha_2)K}{a_{\min}} \right 
\rceil \leq \left \lceil C \right \rceil 
\end{equation*}  
where $C = 3(2\alpha_1 - \alpha_2)$ is a constant. Now, since this splitting can be done in polynomial time, that completes the proof of the theorem. 
\qed
\end{proof}

We now show the result for weighted-bribery in scoring rules.

% \begin{theorem}\label{wB-SC-SP-CV}
%  In single-peaked settings, weighted-bribery is in $P$ for 3-candidate scoring rules when $(\alpha_1 - \alpha_3) \leq 2(\alpha_2 - \alpha_3)$ and is NP-complete otherwise.
% \end{theorem}

\begin{restatable}{theorem}{wbSCsp}
\label{wB-SC-SP}
 In single-peaked settings, weighted-bribery with complete votes is in $P$ for 3-candidate scoring rules when $(\alpha_1 - \alpha_3) \leq 2(\alpha_2 - \alpha_3)$ and is NP-complete otherwise.
\end{restatable}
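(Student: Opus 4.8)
The plan is to prove the two halves separately, exploiting that for three single-peaked candidates bribery and manipulation are essentially the same problem, with the threshold $\alpha_1=2\alpha_2$ (setting $\alpha_3=0$ as usual, so the stated condition reads $\alpha_1\le 2\alpha_2$) being exactly Faliszewski et al.'s boundary between easy and hard manipulation. Throughout I write $a,b,p$ for the candidates and split on whether $p$ is an endpoint of $L$ or its middle element; the single-peaked restriction leaves only the vote set $\{(p,a,b),(a,p,b),(a,b,p),(b,a,p)\}$ in the endpoint case and $\{(p,a,b),(p,b,a),(a,p,b),(b,p,a)\}$ in the middle case, and in the latter $p$ is never ranked last.

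\textbf{NP-completeness when $\alpha_1>2\alpha_2$.} Membership in NP is immediate: guess the $\le k$ bribed voters together with their new (single-peaked) ballots and verify the winner in polynomial time. For hardness I reduce from CWCM$'$, which is NP-complete in this regime by Theorem \ref{SP-CWCM'}. Given a CWCM$'$ instance with non-manipulators $S$ and manipulators $T$ (each of weight at least thrice the heaviest voter in $S$), build a weighted-bribery instance whose voters are $S$ together with one voter of weight $w_i$ for each manipulator $i\in T$, where each such heavy voter initially casts the single-peaked ballot that is worst for $p$; set the budget $k=|T|$. A successful manipulation is mimicked by bribing exactly the heavy voters into the manipulators' ballots, giving the forward direction. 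For the converse I argue that any successful bribery may be assumed to bribe precisely the $|T|$ heavy voters: since each such voter outweighs every voter of $S$ by a factor of three and currently points entirely against $p$, an exchange argument shows that replacing a bribe of a light voter by a bribe of an as-yet-unbribed heavy voter never hurts $p$, so the budget $k=|T|$ is spent exactly on the heavy voters and their chosen re-votes are exactly a manipulator strategy.

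\textbf{Polynomial time when $\alpha_1\le 2\alpha_2$.} If $p$ is an endpoint of $L$, the only single-peaked ballot ranking $p$ first is forced, so the best a bribe can do is fixed; bribing the heaviest ballots that currently place $p$ last, greedily, is optimal and is checked in polynomial time. The only real difficulty is the middle case, where a bribed voter may be set to either $(p,a,b)$ or $(p,b,a)$, so that after fixing the set of bribed voters the remaining task is to split their weight between the two ballots so as to satisfy $score(p)\ge score(a)$ and $score(p)\ge score(b)$ simultaneously. This is exactly the allocation subproblem at the heart of Faliszewski et al.'s manipulation algorithm: the two constraints confine the $(p,a,b)$-weight to an interval, and the hypothesis $\alpha_1\le 2\alpha_2$ makes this interval wide enough that it can always be hit greedily rather than by an exact subset sum. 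I would therefore enumerate, over the constant number of single-peaked vote types, how many voters of each type to bribe --- bribing heaviest-first within a type --- which gives polynomially many candidate bribe sets, and for each run the interval/greedy allocation check.

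\textbf{Main obstacle.} The crux of the hardness half is the exchange argument that forces an optimal bribery to target exactly the heavy voters; this is precisely where the factor-three guarantee of CWCM$'$ (rather than the factor two used by Faliszewski et al.\ for general bribery) is needed, since with too small a weight gap a cheaper light-voter bribe could substitute and break the correspondence with manipulation. For the easy half, the work is to show that the greedy heaviest-first bribe selection together with the interval allocation remains correct across all enumerated bribe counts, i.e.\ that $\alpha_1\le 2\alpha_2$ genuinely removes the subset-sum obstruction that resurfaces --- and yields NP-hardness --- once $\alpha_1>2\alpha_2$.
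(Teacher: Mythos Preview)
Your overall architecture matches the paper's: hardness via the reduction from CWCM$'$ together with a weight-exchange argument exploiting the factor-three gap, and easiness via enumerating, for each of the constantly many single-peaked ballot types, how many voters of that type to bribe (heaviest first) and then checking an allocation. The paper's hardness reduction is exactly what you describe, with the heavy voters seeded on $(a,p,b)$ or $(b,p,a)$ and the inequality $3(\alpha_1-\alpha_2)\ge 2\alpha_1-\alpha_2$ (equivalently $\alpha_1\ge 2\alpha_2$) doing the work in the exchange; your identification of why factor three rather than two is needed is on the mark.

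There is, however, a genuine gap in your polynomial-time half for the endpoint case. Your claim that ``bribing the heaviest ballots that currently place $p$ last'' is optimal is false. With $p\,L\,a\,L\,b$ the admissible ballots are $(p,a,b),(a,p,b),(a,b,p),(b,a,p)$; a voter of type $(a,p,b)$ places $p$ second, not last, yet such voters may well need to be bribed. Concretely, if every voter casts $(a,p,b)$, then $a$ wins and the only way to help $p$ is to bribe $(a,p,b)$ voters to $(p,a,b)$---your greedy bribes nobody. The paper avoids this by doing in the endpoint case exactly what you already do in the middle case: enumerate all feasible triples $(x_1,x_2,x_3)$ with $x_1+x_2+x_3\le k$ specifying how many voters of each of the three non-$p$-top types to bribe (heaviest first within a type), and check whether the single forced re-vote $(p,a,b)$ makes $p$ a winner.

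For the middle case your ``interval wide enough to hit greedily'' is the right intuition but is stated too loosely. The paper's cleaner formulation is the structural fact that when $\alpha_1\le 2\alpha_2$ and $p$ is the middle candidate, $p$ cannot lose to both $a$ and $b$ on the residual profile $S\setminus S'$; consequently one of the two extreme allocations (all bribed voters to $(p,a,b)$, or all to $(p,b,a)$) always suffices, which sidesteps any subset-sum issue entirely. Your interval picture is compatible with this---the interval always contains $0$ or the total bribed weight---but you should make that explicit rather than appeal to a greedy hit.
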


\begin{proof}
  Let $a, b,$ and $p$ be the three candidates. Without loss of generality we can assume that $\alpha_3 = 0$. We will discuss both cases ($\alpha \leq 2\alpha_2$ and $\alpha_1 > 2\alpha_2$) 
separately.

\item{\textbf{Case 1: $\alpha \leq 2\alpha_2$.}} Here we basically have two cases: i) when the input linear ordering of the candidates is $aLbLp$ and ii)  when the input linear ordering of the 
candidates is $aLpLb$. In scoring rules, it is reasonable to assume that the bribed voters will always be made to vote in such a way that $p$ will be at the top. Therefore, in the first case, since 
there is only one vote $(p, b, a)$ that has $p$ placed at the top, all the bribed voters will be bribed to vote for $(p, b, a)$. So now the only part remaining part here is to identify who to bribe. 
This can be done in the following way. Let us suppose that there was a successful bribery for a given instance. Since the only allowed votes for $aLbLp$ are $\{(a, b, p), (b, a, p), (b, p, a), (p, b, 
a)\}$, any successful bribery would have resulted in bribing $x_1$ voters initially voting $(a, b, p)$, $x_2$ voters initially voting $(b, a, p)$, and $x_3$ voters initially voting $(b, p, a)$. Now, 
to identify these $x_1, x_2,$ and $x_3$, we simply need to iterate through all the feasible (a solution $(x_1, x_2, x_3)$ is feasible if the input set of votes has at least $x_1$ voters voting for 
$(a, b, p)$, at least $x_2$ voters voting for $(b, a, p)$, and at least $x_3$ voters voting $(b, p, a)$) integral solutions of $x_1 + x_2 + x_3 \leq k$, where $k$ is the bribe limit, pick the $x_1$ 
heaviest voters voting $(a, b, p)$, the $x_2$ heaviest voters voting $(b, a, p)$, the $x_3$ heaviest voters voting $(b, p, a)$, bribe all of them vote for $(p, b, a)$, and check if it makes $p$ a 
winner. If yes, we accept. Else, we continue on to the next set of solutions which satisfy the equation. Since there are only $O(k^3)$ such solutions, and since $k \leq n$, this can be done in 
polynomial time.

For the case $aLpLb$, the only allowed votes are $\{(a, p, b), (p, a, b), (p, b, a),  \allowbreak (b, p, a)\}$, and so we can bribe a voter to vote for either $(p, a, b)$ or $(p, b, a)$. Therefore, 
the main task here is to first identify the right voters to bribe and then decide on whether to make them vote $(p, a, b)$ or $(p, b, a)$. To do this, let us assume there was a successful bribery in 
favor of $p$. This in turn would have resulted in bribing $x_1$ voters initially voting $(a, p, b)$, $x_2$ voters initially voting $(b, p, a)$, $x_3$ voters initially voting $(p, a, b)$, and $x_4$ 
voters initially voting $(p, b, a)$. Now, let $F = \{(x_1, x_2, x_3, x_4) \: | \: x_1 + x_2 + x_3 + x_4 \leq k \: \text{and $(x_1, x_2, x_3, x_4)$ is feasible} \}$. For each $x = (x_1, x_2, x_3, x_4) 
\in F$, pick the heaviest $x_1$ voters voting $(a, p, b)$, the heaviest $x_2$ voters voting $(b, p ,a)$, the heaviest $x_3$ voters voting $(p, a, b)$, and the heaviest $x_4$ voters voting $(p, b,a)$. 
Let $S'$ be the set of all these voters. Next, calculate the scores of $p, a,$ and $b$ considering all the votes in $S-S'$, where $S$ is the set of all votes given as part of the input. Let these 
scores be denoted by $s_{S-S'}(p), s_{S-S'}(a),$ and $s_{S-S'}(b)$ respectively. Now for the case $aLpLb$, since $\alpha \leq 2\alpha_2$, for any set of votes, $p$ cannot lose to both $a$ and $b$. 
Therefore the only possible cases for the scores $s_{S-S'}(p), s_{S-S'}(a),$ and $s_{S-S'}(b)$ are the following. For each of them we mention how the bribery needs to be done.

\begin{enumerate}
 \item $ s_{S-S'}(p) \geq s_{S-S'}(a), \: s_{S-S'}(p) \geq s_{S-S'}(b) $: In this case, for each voter in $S'$, make him or her vote either $(p, a, b)$ or $(p, b, a)$, if they aren't already voting 
for either of  them.
 \item $s_{S-S'}(b) > s_{S-S'}(p), \: s_{S-S'}(a) \leq s_{S-S'}(p)$: In this case, for each voter in $S'$, make him or her vote $(p, a, b)$, if they aren't already. 
 \item $s_{S-S'}(a) > s_{S-S'}(p), \: s_{S-S'}(b) \leq s_{S-S'}(p)$: Here, for each voter in $S'$, make him or her vote $(p, b, a)$, if they aren't already. 
\end{enumerate}

Doing the above takes only polynomial time since there are only $O(k^4)$ feasible solutions and $k \leq n$. 

\item{\textbf{Case 2: $\alpha > 2\alpha_2$.}} To show that weighted-bribery in this case is NP-complete, we use a reduction from CWCM'. Note that to show this we use the linear ordering $aLpLb$ 
since  there are polynomial time algorithms for both CWCM' and weighted-bribery when $p$ is at one of the ends in the societal order. Now, given an arbitrary instance $(C, S, T, p)$ of CWCM', we 
construct the following instance $(C, V, p, k)$ where $C$ is set of candidates, $V = S \cup T'$ where $T'$ is the set of voters from $T$ voting for either $(a, p, b)$ or $(b, p, a)$. We set $k = 
|T|$.

It is easy to see that if a successful manipulation exists then bribery with at most $k = |T|$ bribes is possible. To prove the converse we basically need to show that any bribery with at most $|T|$ 
bribes is possible if we bribe only the voters from $T'$. For this, let us assume this were not the case and that there was a voter $v \in V - T'$ who was part of the successful bribery in favor of 
$p$. By bribing $v$, the gain for $p$ against $a$ or $b$ is at most $(\alpha_1 + (\alpha_1 - \alpha_2))w(v)$, where $w(v)$ is the weight of voter $v$. On the other hand if a voter $v' \in T'$ was 
bribed, then the gain for $p$ against $a$ or $b$ is at least $(\alpha_1 - \alpha_2)w(v')$. But then, according to the restriction imposed on CWCM', any $v' \in T'$ has a weight at least thrice as 
much 
as the heaviest voter in $V' - T$. This in turn implies that $w(v') \geq 3w(v)$, and so we might as well bribe $v'$ instead of $v$. As a result, any bribery in favor of $p$ can be achieved by bribing 
only the voters from $T'$. Therefore, existence of a successful bribery implies that a successful manipulation exists in CWCM'. \qed
\end{proof}

For the case of top-truncated ballots, we proved in Theorem \ref{SP-SC-D} that, in single-peaked settings, CWCM with top-truncated votes is NP-complete for all 3-candidate scoring rules except 
plurality and veto when the evaluation scheme was round-down. Similarly, we also showed in Theorem \ref{SP-SC-AV} that CWCM with top-truncated votes is NP-complete for all 3-candidate scoring rules 
except plurality when the evaluation scheme was average-score. Based on these two theorems it is easy to see that we can make similar `splitting' arguments as in Theorem \ref{SP-CWCM'} to prove these 
results hold true even for CWCM'. As a result, we can state the following results which can be proved by using a reduction from CWCM' similar to the one shown in case 2 of Theorem \ref{wB-SC-SP}.

% \begin{theorem}
%  In single-peaked electorates, for any 3-candidate scoring rule $X$ that is not isomorphic to plurality or veto, weighted-bribery with top-truncated votes in $X_{\downarrow}$ is NP-complete.
% \end{theorem}

\begin{theorem}
 For any 3-candidate scoring rule $X$ that is not isomorphic to plurality or veto, in single-peaked electorates, weighted-bribery with top-truncated votes in $X_{\downarrow}$ is NP-complete.
\end{theorem}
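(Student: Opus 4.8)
The problem is readily seen to be in NP: a certificate consists of the set of bribed voters together with their replacement top-truncated ballots, and recomputing the $X_{\downarrow}$ scores to verify that $p$ wins takes polynomial time. For NP-hardness, the plan is the two-stage reduction that mirrors the complete-votes development culminating in Theorem~\ref{wB-SC-SP}: first upgrade the manipulation hardness of Theorem~\ref{SP-SC-D} to the weight-restricted variant CWCM' with top-truncated votes in $X_{\downarrow}$, and then reduce CWCM' to weighted-bribery exactly as in Case~2 of Theorem~\ref{wB-SC-SP}.

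For the first stage I would rerun each of the three cases of the reduction in Theorem~\ref{SP-SC-D}, but starting from Partition' instead of Partition. Since Partition' is NP-complete and its instances satisfy $a_i \ge K$ for all $i$, the manipulators (whose weights are the scaled $a_i$) are uniformly bounded below, whereas the non-manipulators in $S$ have bounded weight. I would then apply the splitting argument of Theorem~\ref{SP-CWCM'}: replace every non-manipulator whose weight exceeds one third of the lightest manipulator's weight by $O(1)$ identical copies of smaller weight. Because the copies cast the same single-peaked top-truncated ballot, single-peakedness is preserved, the number of voters stays polynomial, and the accept/reject answer is unchanged, while the instance now satisfies the CWCM' weight condition. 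This gives NP-completeness of CWCM' with top-truncated votes in $X_{\downarrow}$ for every $3$-candidate scoring rule not isomorphic to plurality or veto.

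The second stage takes a CWCM' instance $(C,S,T,p)$ and outputs the bribery instance $(C,\,S\cup T',\,p,\,|T|)$, where $T'$ assigns to each voter of $T$ a fixed \emph{extreme} initial single-peaked ballot placing $p$ in the interior of the axis (for example $(a,p,b)$ or $(b,p,a)$ under $aLpLb$), and the bribe limit is $|T|$. One direction is immediate: a successful manipulation yields a bribery of at most $|T|$ voters. For the converse I would argue, as in Theorem~\ref{wB-SC-SP}, that the factor-three weight gap guaranteed by CWCM' lets one assume without loss of generality that an optimal briber only re-votes the heavy $T'$ voters, so that a budget-$|T|$ bribery succeeds exactly when the manipulation does.

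The step I expect to be the crux is this domination argument under the round-down scheme. Round-down alters the marginal arithmetic: a bullet vote $(p)$ contributes only $\alpha_2$ to $p$ (not $\alpha_1$) and every unranked candidate contributes $0$, so the per-unit-weight swing obtainable by bribing an arbitrary $S$-voter must be recomputed and compared against the swing from bribing a heavy $T'$-voter. In the complete-votes proof of Theorem~\ref{wB-SC-SP} the factor three was tuned precisely so that $3(\alpha_1-\alpha_2)\ge 2\alpha_1-\alpha_2$, i.e. $\alpha_1\ge 2\alpha_2$; since the present statement also covers $\alpha_2<\alpha_1<2\alpha_2$ and $\alpha_1=2\alpha_2$, I expect to verify the domination bound case by case, and in particular to arrange that the CWCM' instance fed into the bribery reduction places $p$ in the interior of the axis (which Cases~2 and~3 of Theorem~\ref{SP-SC-D} already do, whereas the $\alpha_2<\alpha_1<2\alpha_2$ case uses $p$ at an end and so needs an interior-axis re-derivation). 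For each rule I would then choose the replacement ballots of the $T'$-voters, and if necessary enlarge the weight gap in the definition of CWCM' to a rule-dependent constant, so that the heavy voters still dominate; checking that this goes through uniformly for all non-plurality, non-veto rules is the main obstacle.
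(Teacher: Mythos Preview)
Your proposal is correct and follows exactly the paper's approach: the paper's proof is nothing more than the remark that the splitting argument of Theorem~\ref{SP-CWCM'} upgrades Theorem~\ref{SP-SC-D} to CWCM' with top-truncated votes in $X_{\downarrow}$, after which one reduces CWCM' to weighted-bribery ``similar to'' Case~2 of Theorem~\ref{wB-SC-SP}. You have in fact been more careful than the paper, which leaves implicit precisely the two issues you flag as the crux---that Case~1 of Theorem~\ref{SP-SC-D} places $p$ at an end of the axis (so the $aLpLb$ reduction template does not apply verbatim), and that the factor~$3$ in CWCM' was calibrated for $\alpha_1\ge 2\alpha_2$ and may need to be replaced by a rule-dependent constant in the other regimes; your proposed fixes (re-derive Case~1 with $p$ interior, and enlarge the weight gap to a constant depending on $\alpha_1/\alpha_2$) are the natural ones and suffice.
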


\begin{theorem} 
 For any 3-candidate scoring rule $X$ that is not isomorphic to plurality, in single-peaked electorates, weighted-bribery with top-truncated votes in $X_{av}$ is NP-complete.
\end{theorem}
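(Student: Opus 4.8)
The plan is to follow the two-stage recipe announced in the paragraph preceding the statement: first lift Theorem~\ref{SP-SC-AV} from CWCM to CWCM' by a weight-splitting argument, and then reduce CWCM' to weighted-bribery along the lines of Case~2 of Theorem~\ref{wB-SC-SP}. Membership in NP is immediate, since a certificate names the at most $k$ bribed voters together with their replacement top-truncated ballots, and the winner under $X_{av}$ is checkable in polynomial time. Throughout I take $\alpha_3=0$, so that a single-candidate truncated ballot $(c)$ awards $c$ a score of $\alpha_1$ and each of the other two candidates $\tfrac{\alpha_2+\alpha_3}{2}=\tfrac{\alpha_2}{2}$.

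The first step is to show CWCM' with top-truncated votes in $X_{av}$ is NP-complete for every $3$-candidate rule not isomorphic to plurality. I would start from the reduction of Theorem~\ref{SP-SC-AV} (linear order $aLpLb$) and apply the rescale-and-split device of Theorem~\ref{SP-CWCM'}: multiply all weights by a constant and then replace each heavy non-manipulator by several copies of the same ballot, each of weight below one third of the lightest manipulator's weight. Splitting a ballot into copies summing to its original weight leaves every candidate's $X_{av}$ score unchanged, so correctness is preserved; the issue is keeping the number of copies polynomial. This is the first place care is needed: the count is governed by the ratio of the heaviest non-manipulator weight ($\Theta(K)$) to the lightest manipulator weight ($\Theta(a_{\min})$). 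Exactly as Theorem~\ref{SP-CWCM'} invoked Partition' to force $a_{\min}\geq K$, I would first establish a lower-bounded variant of Fixed-Difference Subset Sum (with $a_i\geq K$), NP-complete by a padding reduction analogous to the Partition' argument, and feed that into the Theorem~\ref{SP-SC-AV} construction; then $a_{\min}=\Theta(K)$ makes the split count a constant.

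The second step reduces CWCM' to weighted-bribery. Given a CWCM' instance $(C,S,T,p)$ with $aLpLb$, I would build $(C,V,p,k)$ with $V=S\cup T'$, where $T'$ has one voter per manipulator, all cast at the default truncated ballot $(a)$ (the ballot least favourable to $p$), and set $k=|T|$. The forward direction is trivial: a successful manipulation is reproduced by bribing all $|T'|=|T|$ voters of $T'$ to the manipulators' ballots. For the converse I would use a swap argument to show any successful bribery of at most $|T|$ voters can be converted into one touching only $T'$: whenever an optimal bribery bribes some $v\in S$ but leaves some $v'\in T'$ unbribed, reverting $v$ and instead re-targeting $v'$ from $(a)$ to the truncated ballot $(p)$ decreases neither $s(p)-s(a)$ nor $s(p)-s(b)$; repeating this empties $S$ of bribes, and there is always a spare unbribed $T'$-voter since the total number of bribes is at most $|T'|$.

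The per-unit swings are the crux and the main obstacle, precisely because the statement covers all non-plurality rules, including the veto-like case $\alpha_1=\alpha_2$ that the complete-votes Case~2 of Theorem~\ref{wB-SC-SP} (which needed $\alpha_1>2\alpha_2$) could not reach. Here I would bound the loss from reverting an $S$-voter by its maximal $X_{av}$ margin swing, at most $(2\alpha_1-\tfrac{\alpha_2}{2})\,w(v)$ on either side (using that the split $S$-ballots include the truncated $(a),(b)$, which are the worst originals), and bound the gain from re-targeting $v'$ from $(a)$ to $(p)$ from below, namely $(2\alpha_1-\alpha_2)\,w(v')$ against $a$ and $(\alpha_1-\tfrac{\alpha_2}{2})\,w(v')$ against $b$. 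The thrice-weight hypothesis $w(v')\geq 3w(v)$ of CWCM' is then calibrated so that $3(2\alpha_1-\alpha_2)-(2\alpha_1-\tfrac{\alpha_2}{2})=4\alpha_1-\tfrac{5\alpha_2}{2}\geq 0$ and $3(\alpha_1-\tfrac{\alpha_2}{2})-(2\alpha_1-\tfrac{\alpha_2}{2})=\alpha_1-\alpha_2\geq 0$ hold for every admissible scoring vector (both are nonnegative since $\alpha_1\geq\alpha_2>0$). This is exactly what makes the swap valid uniformly, so a successful bribery yields a profile of bribed-and-default ballots on $T'$ that is a valid manipulation, giving a CWCM' solution and completing the reduction. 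I expect verifying these two inequalities for the full parameter range, and correctly choosing the truncated default/target ballots that realize them, to be where the real work lies; the splitting and the NP-membership are routine by comparison.
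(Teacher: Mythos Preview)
Your proposal is correct and follows exactly the two-step route the paper sketches (lift Theorem~\ref{SP-SC-AV} to CWCM' by weight-splitting, then reduce CWCM' to weighted-bribery via the Case~2 swap argument of Theorem~\ref{wB-SC-SP}); your per-unit swing bounds and the two inequalities $4\alpha_1-\tfrac{5}{2}\alpha_2\geq 0$ and $\alpha_1-\alpha_2\geq 0$ are computed correctly and, as you note, cover the veto-like case that the complete-votes argument could not. One caution on the splitting step: your ``lower-bounded FDSS with $a_i\geq K$'' must also rescale the total sum (as Partition' does, to $2nK$), since under the paper's original FDSS definition $\sum a_i=2K$ the bound $a_i\geq K$ forces $n\leq 2$ and the problem becomes trivial; the paper is equally terse here, so just make the rescaling explicit and verify that the padding reduction actually carries the fixed-difference structure through.
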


Note that we can prove the corresponding results for nearly single-peaked electorates as well.

\subsection{Weighted-Bribery in Eliminate(veto)}
% \subsubsection{Single-peaked electorates.}
Here we look at the problem of weighted-bribery in eliminate(veto). First, we study the problem in single-peaked electorates and following that we look at the nearly single-peaked case. In both 
cases, yet again, we observe that allowing top-truncated voting increases the complexity of weighted-bribery from being in $P$ to being NP-complete. 

\begin{restatable}{theorem}{wbEVsp}
\label{wB-eV-SP}
In single-peaked electorates, in the unique winner model, for 3-candidate eliminate(veto), 
  \begin{enumerate}
   \item weighted-bribery with \textbf{complete votes} is in $P$.
   \item weighted-bribery with \textbf{top-truncated votes} is NP-complete.
  \end{enumerate}
\end{restatable}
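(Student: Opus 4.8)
The plan is to mirror the two-part structure of Theorems~\ref{eVeto-SP} and~\ref{wB-SC-SP}: obtain the polynomial-time algorithm for complete votes by a case analysis driven by Lemmas~\ref{l-eV1} and~\ref{l-eV2}, and obtain the hardness for top-truncated votes by reducing from the weight-restricted manipulation problem CWCM$'$ for eliminate(veto).

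For Part~1, write $c_1 L c_2 L c_3$ and recall that a bribed vote must stay single-peaked, so each voter can be set to one of only four orders. By Lemma~\ref{l-eV1} the middle candidate $c_2$ is never ranked last and hence is never eliminated, so I would split on the position of $p$ in $L$. If $p=c_2$, then $p$ always survives to the final round and the only question is whether $p$ can be forced to \emph{strictly} beat whichever end candidate survives; if $p$ is an end candidate, then by Lemma~\ref{l-eV2} the reverse of any winning elimination order is single-peaked with $p$ on top, which for three candidates forces the unique order $(\text{far end}, c_2, p)$, so $p$ wins exactly when the far end candidate collects strictly the most last-place weight in round~$1$ and $p$ strictly beats $c_2$ in round~$2$. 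In each case the winning condition is a small system of strict inequalities on the weighted last-place tallies, so, exactly as in Case~1 of Theorem~\ref{wB-SC-SP}, I would enumerate all feasible vectors recording how many voters of each of the constantly many original types are reassigned to each target type, greedily bribe the heaviest such voters (bribing a heavier voter is never worse for the relevant tallies), and test whether $p$ becomes the unique winner. There are only polynomially many such vectors and $k\le n$, giving a polynomial-time algorithm.

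For Part~2, membership in NP is immediate, since the set of bribed voters together with their new (single-peaked, top-truncated) votes is a polynomial certificate whose eliminate(veto) outcome is checkable in polynomial time. For hardness I would first argue that CWCM$'$ for $3$-candidate eliminate(veto) with top-truncated votes is NP-complete: the reduction of the second part of Theorem~\ref{eVeto-SP} goes through if Partition is replaced by Partition$'$ (the required half-sum simply becomes $nK$ and the fixed-vote weights scale accordingly), and since the outcome of eliminate(veto) depends only on the aggregate weighted last-place tallies, I can apply the same splitting argument as in Theorem~\ref{SP-CWCM'} — replacing each non-manipulator by several copies of smaller weight casting the identical ballot — to force every manipulator to have weight at least thrice that of the heaviest non-manipulator, as CWCM$'$ demands. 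I would then reduce this CWCM$'$ problem to weighted-bribery as in Case~2 of Theorem~\ref{wB-SC-SP}: given $(C,S,T,p)$ with $a L b L p$ (so that $p$ sits at the end of $L$, matching the position used in the eliminate(veto) hardness of Theorem~\ref{eVeto-SP}, in contrast to the middle position used for scoring rules), set $V=S\cup T'$, where $T'$ is the set of $T$-voters cast adversarially against $p$, and put the budget $k=|T|$. A successful manipulation then corresponds to bribing all of $T'$ to their manipulator ballots.

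The main obstacle is the converse direction of this last reduction — showing that any successful bribery may be assumed to bribe only the heavy $T'$ voters. In the scoring-rule setting of Theorem~\ref{wB-SC-SP} this followed from a clean per-voter ``gain'' estimate, but for eliminate(veto) the outcome is governed by a sequence of threshold comparisons on the weighted last-place tallies rather than by additive scores, so the argument must be recast at the level of these tallies: because each $T'$ voter has weight at least $3w(v)$ for any $S$-voter $v$, re-directing an as-yet-unbribed $T'$ voter shifts at least as much last-place weight in $p$'s favour at every round as bribing $v$ does, and since $|T'|=k$ equals the budget there is always a spare $T'$ voter to substitute. Iterating this exchange moves all bribes onto $T'$ without ever harming $p$, at which point the bribery is literally a setting of the $T$-ballots and hence a CWCM$'$ solution, yielding a partition. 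Making this exchange argument watertight for the elimination dynamics is where the care is required; everything else parallels the proofs already given.
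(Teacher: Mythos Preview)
Your plan is in the right spirit for both parts, but in each case it diverges from the paper's route and leaves something not fully pinned down.

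\textbf{Part 1.} The paper does \emph{not} enumerate count vectors. It splits on the position of $p$ in $L$ (as you do), but then cases on the \emph{current} elimination order and for each case exhibits a single greedy rule: for $aLbLp$ it repeatedly bribes the heaviest voter not ranking $p$ above $b$ to $(p,b,a)$, sometimes in two phases; for $aLpLb$ it first proves a small lemma (Lemma~\ref{wB-eV-SP-L}) that when the input order is $(a,p,b)$ it suffices to aim for $(a,b,p)$ rather than $(b,a,p)$, which collapses the two possible target orders to one and again reduces to a simple greedy pass. Your enumeration approach is workable, but your justification ``bribing a heavier voter is never worse for the relevant tallies'' is not quite enough as stated: your vectors record how many voters of each source type go to \emph{each} target type, and when a single source is split across targets you have not said which specific voters get which target. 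What makes this harmless here is that, for a fixed winning elimination order, each source type has a \emph{dominant} target (one that weakly helps both round-1 and round-2 inequalities), so one may assume at most one target per source and then ``pick heaviest'' is unambiguous. That observation is essentially what the paper's Lemma~\ref{wB-eV-SP-L} and its per-case greedy choices are encoding.

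\textbf{Part 2.} The paper does \emph{not} pass through CWCM$'$. It reduces directly from Partition$'$: it takes essentially the manipulation instance of Theorem~\ref{eVeto-SP} (with $aLbLp$), lets the $T$ voters be the $n$ partition elements all voting $(b,a,p)$, pads $S$ with many small-weight copies so that the heaviest $(b,a,p)$ voters are exactly those in $T$, sets $k=n$, and then argues \emph{on this specific instance} that the only rational bribe is to change a $(b,a,p)$ voter to $(p)$ or $(a)$, and hence the $n$ bribes may as well be the $n$ heaviest $(b,a,p)$ voters, namely $T$; the required $nK$/$nK$ split then forces a partition. Your route---show CWCM$'$ is hard for eliminate(veto) by splitting non-manipulators, then run the Theorem~\ref{wB-SC-SP} Case~2 reduction---is conceptually fine, but the exchange step you flag is genuinely delicate and not dispatched by the factor-$3$ weight ratio alone. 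For scoring rules the ``gain for $p$'' is a single additive quantity, so $w'\ge 3w$ gives a clean inequality; for eliminate(veto) the swap $v\to v'$ alters several last-place tallies at once and can change \emph{which} candidate is eliminated in round~1, so your phrase ``at every round'' is not well-defined across the swap. One can show that $LV_1(p)$ never increases under the swap, but controlling round~2 after the first-round elimination changes requires exactly the kind of instance-specific rationality argument the paper gives directly. In other words, your detour through CWCM$'$ buys nothing here: you end up having to analyse the specific instance anyway, which is what the paper does from the start.
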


\begin{proof}
 First we discuss the polynomial time algorithm for weighted-bribery when only complete votes are allowed.
 
 Let $a, b,$ and $p$ be the three candidates. Without loss of generality, we need to consider only two linear orders: $aLbLp$ and $aLpLb$.

\item{\textbf{i) $aLbLp$:}} For the given linear ordering, only the following elimination orders are possible: $\{(a, b, p), (a, p, b), (p, a, b), (p, b, a)\}$, where $(a, b, p)$ implies that 
$a$ is eliminated in the first round, followed by $b$, and then $p$ (which is the winner). We will consider each of these cases separately.

\item{\textbf{Case 1:}} $(a, b, p)$. Here no bribery is required as $p$ is already the winner.

\item{\textbf{Case 2:}} $(a, p, b)$. In this case it is easy to see that a simple greedy strategy would suffice. Since $b$ cannot be eliminated in the first round (as there is no vote with $b$ at the 
end according to $L$), we only need to make sure that $p$ wins against $b$ in the second round. This can be ensured by doing the following: 

Starting with the heaviest voter among all voters who do not rank $p$ above $b$, bribe him or her to vote for $(p, b, a)$. If $p$ is the winner after the bribe, then accept. If not, 
continue the same until number of voters bribed exceeds the limit $k$. If the limit exceeds, reject. 

\item{\textbf{Case 3:}} $(p, a, b)$. Here for $p$ to win, it has to be saved from elimination in the first round, and subsequently has to beat $b$ in the second round. To do this, we can proceed in 
two phases: In the first phase, starting with the heaviest voter among all voters who do not rank $a$ last, bribe them to vote for $(p, b, a)$. After each bribe, check if $LV(a) > LV(p)$, where 
$LV(a)$ denotes the total weight of all the voters who have placed $a$ last in their preference ordering. Once that is done, i.e. once $LV(a) > LV(p)$, we check if $p$ is the winner. If yes, we 
accept. Else, we move on to the second phase where starting from the heaviest voter but now among all voters who do not rank $p$ above $b$, we bribe them to vote for $(p, b, a)$. Here again, after 
each bribe, we see if it results in $p$ being the winner. If it does, we accept. Otherwise, if at any point during this algorithm the number of voters bribed exceeds the limit $k$, then we reject.   

\item{\textbf{Case 4:}} $(p, b, a)$. Follow the exact same algorithm as in case 3. 

\item{\textbf{ii) $aLpLb$:}} Here, only the following elimination orders are possible: $\{(a, b, p), \allowbreak (a, p, b), \allowbreak (b, a, p), \allowbreak (b, p, a)\}$. Again, we will consider 
each of these cases separately. 

\item{\textbf{Case 1:}} $(a, b, p)$, $(b, a, p)$. In both these cases no bribery is required as $p$ is already the winner.

\item{\textbf{Case 2:}} $(a, p, b)$. Here the situation is a little more complicated since the voters can be bribed to either induce the elimination order $(a, b, p)$ or $(b, a, p)$. But although 
this is the case, it is possible to prove (see Lemma \ref{wB-eV-SP-L}) that inducing $(a, b, p)$ is enough as inducing the other would require bribing more voters. So now, this case is similar to 
case 2 for the linear order $aLbLp$ and we can use the use the algorithm described there. 

\item{\textbf{Case 3:}} $(b, p, a)$. This is similar to case 2 above with $a$ and $b$ interchanged. 

This concludes the proof of the first part.

To prove the second part, we show how Partition' can be used to prove that weighted-bribery is hard for eliminate(veto) when top-truncated votes are allowed. The proof essentially follows an idea 
similar to the one used for proving the corresponding result for manipulation in Theorem \ref{eVeto-SP}, only that we show this reduction through the modified partition problem. The main motivation 
behind using the modified partition problem is to basically facilitate a situation that will enable us to argue that the bribery can be restricted to a certain set of voters (who correspond to the 
manipulators in the corresponding manipulation instance). 

Now, it is easy to see that the problem is clearly in NP. To prove NP-hardness, we show a reduction from an arbitrary instance $\{a_1, \cdots, a_n\}$ of Partition', where $\sum_i a_i = 2nK$ and $a_i 
\geq K, \forall i$, to an instance of weighted-bribery $(C, V, p, k)$, where $C = \{a, b, p\}$ is the set of candidates, $aLbLp$ is the linear order of the candidates, and $V$ is the set of the 
following voters. 

\begin{enumerate}
 \item For each $a_i$, construct a voter $v_i$ whose weight is $a_i$ and who votes for $(b, a, p)$. Let $T$ be the set of all these voters.
 \item Construct the following set of voters $S$: $n$ voters of weight $K$ each voting for $(b, p, a)$, $n$ voters of weight $K-1$ each voting for $(b, a, p)$, $n-1$ voters of weight $1$ each voting 
for $(b, a, p)$, 1 voter of weight 1 voting for $(a, b, p)$, 1 voter of weight 2 voting for $(b, p, a)$, 1 voter of weight 2 voting for $(p)$, and 1 voter of weight 3 voting for $(a)$. 
\end{enumerate}
Set the bribe limit $k = n$ and $V = S \cup T$. 

First thing to notice is that, if $a$ is eliminated in the first round then $p$ cannot be a unique winner, no matter how we choose the $n$ voters to bribe. Therefore the only way to make $p$ a 
unique winner is by making sure that $a$ does not get eliminated. Now since the bribe limit is $n$, the maximum total weight of the voters bribed cannot exceed $2nK$. Also, using arguments similar 
to the one used in the proof of Theorem \ref{eVeto-SP}, we can see that any bribery with at most $n$ bribes is possible only if there is an equal weight of $nK$ that is assigned to both $(p)$ and 
$(a)$. Therefore, the only part that remains to be argued is to show that this bribery with at most $n$ bribes can be achieved if we bribe only those voters in $T$. 

Before we show this, let us see the objectives involved in the bribery of this instance. First is that any bribe results in changing the vote of the corresponding voter to either $(p)$ or $(a)$. 
Secondly, while bribing a voter, we always want the gain for $p$ against every other candidate to be as much as possible. And lastly, we also need to ensure that while $a$ is given enough score so 
that it is not eliminated in the first round, any bribe to vote for $(a)$ does not result in the score of $p$ decreasing.

Now, having seen the objectives, let us suppose that there was a voter $v \in S$ who was part of a successful bribery in favor of $p$. Before the bribe, $v$ has one among $(b, p, a)$, $(b, a, p)$, 
$(a, b, p)$, $(p)$, and $(a)$ has his or her vote. But given the objectives, the only rational choice of a voter who has to be bribed is one who's current vote is $(b, a, p)$ (bribing one with $(a)$ 
is irrational since to compensate the reduction in the score of $a$ we will have to bribe another voter; bribing a voter with $(b, p, a)$ is irrational since if this voter was bribed to vote for 
$(p)$ then the gain for $p$ would have been more if instead a voter voting $(b, a, p)$ would have been bribed, or otherwise if he or she was bribed to vote for $(a)$ then another voter would have to 
be bribed to compensate the reduction in $p$'s score; bribing a voter with $(a, b, p)$ is irrational since bribing this voter has the same effect as bribing a voter with $(b, a, p)$ and the weights 
of all the voters voting $(b, a, p)$ is at least as much as the weight of the voter voting $(a, b, p)$). But then, all the voters in $T$ are not only voting for $(b, a, p)$ initially, but they also 
have weights which are at least as much as of those voters voting $(b, a, p)$ in $S$. Hence, any bribery that is possible with at most $n$ bribes can be achieved if we bribe only the voters from $T$. 
Now, since the total weight of all voters in $T$ is equal to $2nK$, this implies that $p$ is a unique winner if and only if there is a partition. 
\qed 
\end{proof}

\begin{restatable}{lemma}{wbEVl} 
\label{wB-eV-SP-L}
 For the linear order $aLpLb$, when the input elimination order is $(a, p, b)$, for a successful bribery it is enough to induce the elimination order $(a, b, p)$.
\end{restatable}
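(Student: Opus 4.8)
The plan is an exchange argument. First I would record the structural facts forced by the axis $aLpLb$: the only single-peaked complete votes are $(a,p,b),(p,a,b),(p,b,a),(b,p,a)$, none of which ranks $p$ last, so (as in Lemma \ref{l-eV1}) the candidate deleted in the first round is always $a$ or $b$, and consequently the only two elimination orders that make $p$ the winner are $(a,b,p)$ and $(b,a,p)$. Hence the lemma is equivalent to the statement that any successful bribery inducing $(b,a,p)$ can be replaced, \emph{without increasing the number of bribes}, by one inducing $(a,b,p)$.

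To do this I would take an arbitrary successful bribery $B$ that bribes a voter set $U$ and induces $(b,a,p)$, and define a new bribery $B'$ that bribes the \emph{same} set $U$ but makes every bribed voter cast $(p,b,a)$. The point of the vote $(p,b,a)$ is that it ranks $a$ last \emph{and} has $p \succ b$; thus redirecting to it can only help keep $a$ as the first candidate eliminated while simultaneously moving weight onto $p$'s side of the eventual $p$-versus-$b$ contest. Since $|B'|=|U|=|B|$, it only remains to check that $B'$ actually produces the elimination order $(a,b,p)$.

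For the first round I would argue by monotonicity of veto-elimination. Because the input order is $(a,p,b)$, we have $LV(a)\ge LV(b)$ to begin with; replacing the voters in $U$ by $(p,b,a)$ only (weakly) raises $LV(a)$ and (weakly) lowers $LV(b)$, so under $B'$ still $LV(a)\ge LV(b)$ and $a$ is eliminated first (with a fixed tie-breaking this stays consistent in the unique-winner model). For the second round, between $p$ and $b$, the only voters with $b\succ p$ under $B'$ are the original $(b,p,a)$ voters lying outside $U$; write $\delta_{\bar U}$ for their total weight and $W$ for the total weight. Here I would invoke the hypothesis that $B$ induces $(b,a,p)$: in $B$ the candidate $b$ is eliminated first, so $b$ carries strictly more than half of $W$ as last-place votes, whence $a$'s last-place weight under $B$ is below $W/2$. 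Those outside-$U$ voters are unbribed in $B$ and rank $a$ last there, so $\delta_{\bar U}$ is at most $a$'s last-place weight under $B$, giving $\delta_{\bar U}<W/2$. Therefore $p$ strictly beats $b$ in the second round of $B'$, and $B'$ induces $(a,b,p)$ with the same number of bribes.

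The crux, and the step I expect to be the main obstacle, is precisely this second-round estimate: one has to show that after redirecting every bribed voter to $(p,b,a)$ there is not too much weight left on $b\succ p$. This is exactly where the comparison with the competitor order pays off, since forcing $(b,a,p)$ obliges $b$ to absorb a majority of the first-round vetoes, which in turn caps the residual $a$-last (equivalently $b\succ p$) weight below half. The only other delicate point is the bookkeeping with the unique-winner tie-breaking needed to keep all the inequalities strict, which is routine once the monotonicity observation above is in place.
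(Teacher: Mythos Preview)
Your proposal is correct and takes a genuinely different route from the paper. The paper argues by direct weight accounting: writing $W_1,W_2,W_3,W_4$ for the total weights on $(a,p,b),(p,a,b),(p,b,a),(b,p,a)$, it computes the minimum weight $D_{pb}=\lceil (W_4-(W_1+W_2+W_3))/2\rceil$ that must be moved off $(b,p,a)$ to induce $(a,b,p)$, then shows that getting $b$ eliminated first requires moving at least $D_{ab}=W_3+D_{pb}$ weight, of which at least $D_{pb}$ must again come from $(b,p,a)$ voters; since in both scenarios one is bribing from the same pool (and would greedily take the heaviest), the $(b,a,p)$ route needs at least as many bribes.

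Your exchange argument sidesteps this arithmetic entirely: you keep the bribed set $U$ fixed and simply redirect every bribed ballot to $(p,b,a)$. The monotonicity step for round one is clean, and the key observation for round two---that the unbribed $(b,p,a)$ voters already contribute to $LV_B(a)$, which is forced below $W/2$ by the assumption that $B$ eliminates $b$ first---is exactly the right leverage. This is arguably more robust than the paper's argument, which quietly relies on the greedy ``bribe the heaviest'' heuristic being optimal in both scenarios to convert the weight comparison into a voter-count comparison; your version never needs to identify an optimal bribery, only to transform an arbitrary one. The paper's approach, on the other hand, gives slightly more quantitative information (the explicit gap $W_3$), though that extra information is not used elsewhere.
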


\begin{proof}
 Since the linear order is $aLpLb$, only the following votes are allowed: $\{(a, p, b), \allowbreak (p, a, b), \allowbreak (p, b, a), (b, p, a)\}$. Let us assume that the total weight of all the 
voters voting $(a, p, b)$ is $W_1$, those voting $(p, a, b)$ is $W_2$, those voting $(p, b, a)$ is $W_3$, and those voting $(b, p, a)$ is $W_4$. Now, since the input elimination order is $(a, p, b)$, 
we have the following inequalities: (1) $W_1 + W_2 < W_3 + W_4$ (since $a$ is getting eliminated in the first round, $LV(a) > LV(b)$, where LV(a) is the total weight of all the voters who place $a$ at 
the end), and (2) $W_1 + W_2 + W_3 < W_4$ (since $p$ gets eliminated in the second round). A successful bribery for $p$ can proceed in two ways. First is if the bribery induces the elimination order 
$(a, b, p)$ and the other way is to induce the elimination order $(b, a, p)$. The task here is to prove that inducing $(a, b, p)$ is better than inducing $(b, a, p)$, or in other words, prove that 
inducing $(b, a, p)$ requires bribing at least one more voter than what is required to induce $(a, b, p)$. 

To prove this, let us consider the case when the elimination order $(a, b, p)$ is induced. Since the input already has $a$ getting eliminated in the first round, we only need to look at the second 
round and see how much total weight should be bribed so that $p$ is saved from elimination. Once $a$ is eliminated, $LV(b) = W_1 + W_2 + W_3$, and $LV(p) = W_4$. Therefore, for $p$ to win in the 
second round, the weight of votes we need to bribe is at least $D_{pb} = \left \lceil \frac{W_4 - (W_1 + W_2 + W_3)}{2} \right \rceil$. Let this bribery require $k_1$ voters to bribed. Moreover, note 
that all the bribed votes are of the form $(b, p, a)$ and they will be bribed to vote for $(p, b, a)$.  

Now, let us consider the second elimination order $(b, a, p)$. To induce this, that is to make $b$ get eliminated in the first round, the total weight of votes we need to bribe is at least $D_{ab} = 
\left \lceil \frac{(W_3 + W_4) - (W_1 + W_2)}{2} \right \rceil = \left \lceil \frac{2W_3 + W_4 - (W_1 + W_2 + W_3)}{2} \right \rceil = W_3 + \left \lceil \frac{ W_4 - (W_1 + W_2 + W_3)}{2} \right 
\rceil = W_3 + D_{pb}$. As a result, among voters voting $(b, p, a)$ we need to bribe a weight of at least $D_{pb}$ so as to get $b$ eliminated in the first round. But then, this bribery requires at 
least $k_1$ voters (from above), which in turn implies that we need at least one more voter to be bribed here than in the case where the elimination order $(a, b, p)$ was induced. Hence, when the 
input elimination order is $(a, p, b)$ and the linear order $aLpLb$, for a successful bribery it is enough to induce the elimination order $(a, b, p)$.  \qed
\end{proof}

% 
% % \subsubsection{Nearly single-peaked electorates.}
Next we look at the complexity of bribery for eliminate(veto) in 1-maverick single-peaked electorates.

\begin{restatable}{theorem}{wbEVnsp}
\label{wB-eV-NSP}
In 1-maverick single-peaked electorates, in the unique winner model, for 3-candidate eliminate(veto), 
  \begin{enumerate}
   \item weighted-bribery with \textbf{complete votes} is in $P$.
   \item weighted-bribery with \textbf{top-truncated votes} is NP-complete.
  \end{enumerate}
\end{restatable}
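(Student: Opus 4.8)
The plan is to dispatch statement~2 almost immediately and to spend the real effort on statement~1. For statement~2, I would observe that every single-peaked electorate is vacuously a $0$-maverick, and hence a $1$-maverick, SP electorate, and that the top-truncated instance produced by the reduction in Theorem~\ref{wB-eV-SP} is genuinely single-peaked with respect to $aLbLp$ (each of the votes $(b,a,p)$, $(b,p,a)$, $(a,b,p)$, $(p)$, $(a)$ used there is single-peaked for that axis). Thus the NP-hardness established in Theorem~\ref{wB-eV-SP} transfers verbatim, since the target instances are themselves legal $1$-maverick inputs; combined with the obvious membership in NP (guess the $k$ bribed voters and their new ballots, then run eliminate(veto)), this gives statement~2. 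If one prefers a reduction whose target is strictly, rather than vacuously, nearly single-peaked, I would insert a single low-weight maverick voter into the construction exactly as was done to pass from Theorem~\ref{eVeto-SP} to Theorem~\ref{eVeto-NSP}, with its weight chosen small enough not to disturb the Partition$'$ accounting.

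For statement~1 the difficulty is that Lemma~\ref{l-eV1} may fail: the lone maverick can let the middle candidate of $L$ be eliminated first. I would branch on whether the maverick voter is among those bribed. If it is, I spend one bribe to set the maverick's ballot to a single-peaked order topped by $p$; the residual electorate (all remaining unbribed votes together with this re-set vote) is then genuinely single-peaked, and I invoke the algorithm of Theorem~\ref{wB-eV-SP} with budget $k-1$, treating the re-set vote as part of the fixed profile. This branch therefore reduces directly to the single-peaked case.

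If the maverick is not bribed, its vote stays fixed and I would mirror the case split of Lemma~\ref{l-eV3}. In the first sub-case only corner candidates of $L$ are ever eliminated; then the situation is exactly the single-peaked analysis of Theorem~\ref{wB-eV-SP}, except that the maverick contributes a fixed additive shift to every last-vote weight $LV(\cdot)$, so the same greedy routines apply once the relevant thresholds are recomputed with that constant folded in, and I would re-verify that the comparison of Lemma~\ref{wB-eV-SP-L} survives the shift. In the second sub-case the middle candidate is eliminated first, which by Lemma~\ref{l-eV1}-type reasoning is possible only because the maverick places it last; as in Case~2 of Lemma~\ref{l-eV3}, the maverick then dictates all subsequent eliminations, so $p$ can win only if the maverick ranks $p$ at the top. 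Since the maverick is fixed this is a decidable condition, and if it holds the residual problem on the two surviving candidates is a single pairwise eliminate(veto) contest, solved by a greedy threshold bribery. Accepting iff some branch accepts answers the instance.

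The main obstacle I anticipate is precisely this second sub-case, where the structural backbone Lemma~\ref{l-eV1} is unavailable; handling it cleanly requires the dictatorship observation of Lemma~\ref{l-eV3}, so that the maverick's fixed vote collapses the remaining rounds to a controllable pairwise comparison. Because there are only three candidates, each branch contributes at most a constant number of elimination orders and each greedy subroutine runs in polynomial time, so the overall procedure stays polynomial.
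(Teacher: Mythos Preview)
Your proposal is essentially correct and uses the same key ideas as the paper, though organized differently.

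For part~2, your primary observation---that a genuinely single-peaked instance is in particular a $1$-maverick instance, so the hardness reduction of Theorem~\ref{wB-eV-SP} applies verbatim---is valid and slightly cleaner than what the paper does. The paper instead takes your ``if one prefers'' alternative: it modifies the non-target voter set from Theorem~\ref{wB-eV-SP} by inserting a single light maverick $(p,a,b)$, exactly mirroring the passage from Theorem~\ref{eVeto-SP} to Theorem~\ref{eVeto-NSP}.

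For part~1, the paper's decomposition is different from yours. Rather than branching first on whether the maverick is bribed, it branches on the axis ($aLbLp$ versus $aLpLb$) and then on the \emph{initial} elimination order of the given profile. The corner-first initial orders are dispatched by reusing the greedy routines of Theorem~\ref{wB-eV-SP} unchanged; the new middle-first initial orders are handled by exactly your dictatorship observation from Lemma~\ref{l-eV3}, with the conclusion that one need only bribe the maverick itself. The paper then argues separately that bribing a non-maverick voter into a maverick ballot is never useful. Your ``bribe the maverick and reduce to Theorem~\ref{wB-eV-SP}'' branch absorbs the paper's bribe-the-maverick cases, and your ``don't bribe, corner-first target'' branch corresponds to the paper's reuse of the single-peaked greedy; so the two decompositions package the same ingredients.

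One point to tighten in your ``don't bribe, middle-first target'' sub-case: once you have established that the maverick dictates round~2, there is no residual pairwise contest left to solve by bribery---round~2 is already determined by the fixed maverick ballot. The only non-trivial question in that sub-case is whether bribing non-maverick voters can force the middle candidate to lose round~1, which is a two-sided balancing constraint ($LV(a)$ and $LV(p)$ both strictly below the maverick's weight), not a single greedy threshold. The paper sidesteps this by never targeting a middle-first order unless it is already the initial order; you should either argue that this sub-case is always subsumed by your other two branches, or treat the round~1 balancing explicitly rather than calling it a threshold problem.
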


\begin{proof}
 To prove the first part, we proceed like in Theorem \ref{wB-eV-SP} and consider the two possible input linear orderings $aLbLp$ and $aLpLb$ separately, where $a, b,$ and $p$ are the three candidates.
 
\item{\textbf{i) $aLbLp$:}} Now as opposed to the case in single-peaked electorates, here  $b$ can be eliminated in the first round (due to the presence of the maverick) and so, the 
following elimination orders are possible: $\{(a, b, p), (b, a, p), (b, p, a), (a, p, b), \allowbreak (p, a, b), (p, b, a)\}$. We will consider each of these separately.

\item{\textbf{Case 1:}} $(a, b, p), (b, a, p).$ In both these cases, no bribery is required as $p$ is already the winner.
\item{\textbf{Case 2:}} $(b, p, a).$ $b$ can only be  placed in the last place in his or her preference ordering by the maverick. Therefore, here, the fact that $b$ is getting eliminated in the first 
round implies that the weight of the maverick is greater than the total weight of all voters who have placed $a$ at the end and the total weight of all voters who have placed $p$ at the end. This in 
turn implies that once $b$ is eliminated, only the candidate above $b$ in the maverick's preference order will get eliminated, or in other words the maverick in this scenario acts like a dictator. 
Hence, the only voter who needs to be bribed is the maverick. 
\item{\textbf{Case 3:}} $(a, p, b), (p, a, b), (p, b, a).$ For all these cases, we can follow the exact same algorithm as outlined in the proof of Theorem \ref{wB-eV-SP} for the case of 
single-peaked electorates.

While we have covered all the cases for the linear order $aLbLp$, one question that could arise is: What if the input set of voters did not have a maverick? Is there anything more that can be 
achieved if we introduce a maverick while bribing? The answer to this is No. This is so because, consider the case if the input elimination order was $(a, p, b)$. Then the only rationale choice 
(among $(a, p, b)$ and $(p, a, b)$) of vote for a bribery-induced maverick according to $L$ is $(p, a, b)$. Now, let us suppose that we were to introduce such a maverick through bribing. This in turn 
can result in two possibilities: i) $b$ getting eliminated in the first round ii) $a$ remains the one to be eliminated in the first round. In case i), the fact that $b$ got eliminated in the first 
round as a result of introducing the maverick implies that there already existed a voter in the input whose weight was so high that bribing him alone would have anyway sufficed. And our greedy 
strategy would have anyway identified such a voter and bribed him or her to vote for $(p, b, a)$. In the other case, the only objective this achieves is that it lowers the gap between the scores of 
$p$ and $b$, which again would have been achieved by our greedy strategy as it always picks the heaviest voter who does not rank $p$ above $b$ and makes him or her vote for $(p, b, a)$. Therefore 
there is nothing more we are achieving by introducing a maverick through bribery. Similarly, we can argue the same for the other elimination orders.

\item{\textbf{ii) $aLpLb$:}} For this linear order if the input elimination order were $(p, a, b)$ or $(p, b, a)$ then we can make the same argument as in the case 2 above to say that we need to 
bribe only the maverick and make him or her vote for $(p, a, b)$ or $(p, b, a)$. For all the other cases we can proceed exactly as outlined in the proof of Theorem \ref{wB-eV-SP} for the case of 
single-peaked electorates. Additionally, the question of introducing a maverick through bribing does not arise here because both the maverick votes $(a, b, p)$ and $(b, a, p)$ are irrational choices 
as they both have $p$ placed at the end. 

This concludes the proof of the first part of theorem. 

Next, we show the NP-completeness result by using similar arguments as in the proof of second part of Theorem \ref{wB-eV-SP} with only a slight modification to the construction of $T$. Here, in $T$ 
we have the following voters: $n$ voters of weight $K$ each voting for $(b, p, a)$, $n$ voters of weight $K-1$ each voting for $(b, a, p)$, $n-1$ voters of weight 1 each voting $(b, a, p)$, 1 voter 
of weight 1 voting for $(a, b, p)$, 1 voter of weight 1 voting for $(p)$, 1 voter of weight 1 voting for $(p, a, b)$, 1 voter of weight 2 voting for $(b, a, p)$, and 1 voter of weight 2 voting for 
$(a)$. Note that the voter voting $(p, a, b)$ is the maverick. \qed
\end{proof}
  
\subsection{Is Weighted-Bribery for Weak-Condorcet Consistent Rules Always Easy?}
% \citeauthor{brandt2010} \cite{brandt2010} showed that in single-peaked electorates weighted-bribery is in $P$ for all weak-Condorcet consistent voting rules. In fact, they proved the following 
% general result for all weak-Condorcet rules. 

% \begin{theorem}[\cite{brandt2010}]
%  For weak-Condorcet consistent rules in single-peaked electorates, weighted-\$bribery, negative-weighted-bribery, and negative-weighted-\$bribery are NP-complete, and bribery, \$bribery, 
% weighted-bribery, negative-bribery, negative-\$bribery are in $P$. 
% \end{theorem}

% The $P$ results in the theorem above and the reason why it was possible to consider all weak-Condorcet consistent voting rules together was because of the well-known property of single-peaked 
% electorates where it is guaranteed that there is always at least one weak-Condorcet winner (the top choices of the ``median'' voters are always weak-Condorcet winners). However, this property no 
% longer holds when top-truncated votes are allowed. As has also been pointed out by Cantala, it is not even necessary that a weak-Condorcet winner exists in such settings \cite{cantala2004}. We 
% illustrate this with the following example.

\citeauthor{brandt2010} \cite{brandt2010} showed that in single-peaked electorates weighted-bribery is in $P$ for all weak-Condorcet consistent voting rules (see \cite[Theorem 4.4]{brandt2010} 
for a more general result). The $P$ results in their theorem and the reason why it was possible to consider all weak-Condorcet consistent voting rules together was because of the well-known 
property of single-peaked electorates where it is guaranteed that there is always at least one weak-Condorcet winner (the top choices of the ``median'' voters are always weak-Condorcet winners). 
However, this property no longer holds when top-truncated votes are allowed. As has also been pointed out by \citeauthor{cantala2004} \cite{cantala2004}, it is not even necessary that a 
weak-Condorcet winner exists in such settings. We illustrate this with the following example.

\begin{example}
 Let $C = \{a, b, c, d, e\}$ with $a L b L c L d L e$ as the linear ordering. Let there be 5 voters and let their votes be $a\succ_{v_1} b \succ_{v_1} c$, $b \succ_{v_2} c$, $c \succ_{v_3} d$, $d 
\succ_{v_4} e$, and $e \succ_{v_5} d$, respectively. Now, it is easy to see that in the pairwise majority relation, $a$ and $b$ lose to $d$, $c$ loses to $b$, and both $d$ and $e$ lose to $c$. Since 
everyone loses at least once, there is no weak-Condorcet winner.  
\end{example}

Because of the above we can no longer consider all weak-Condorcet consistent rules together like in \cite{brandt2010} and exploit the connection between the weak-Condorcet winner(s) and ``median'' 
voters to come up with polynomial time algorithms for weighted-bribery. In fact, next we show that for 3-candidate Baldwin's rule (also known as Fishburn's version of Nanson's rule \cite{niou1987}), 
which is a weak-Condorcet consistent rule in single-peaked electorates \cite{brandt2010}, weighted-bribery is NP-complete when top-truncated ballots are allowed. To show this we use an idea similar 
to the one used in Theorem \ref{wB-eV-SP}. 

\begin{restatable}{theorem}{wbBaldwin}
\label{wB-Baldwin-SP-PV}
 In single-peaked electorates, weighted-bribery with \textbf{top-truncated votes} is NP-complete for 3-candidate Baldwin's rule.
\end{restatable}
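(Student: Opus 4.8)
The problem is in NP, since a successful bribery is certified by naming the at most $k$ bribed voters together with their new ballots, and the Baldwin winner of the resulting election is computable in polynomial time. For NP-hardness the plan is to mirror the reduction used for eliminate(veto) in Theorem \ref{wB-eV-SP}, reducing from Partition'. Given an instance $\{a_1,\dots,a_n\}$ with $\sum_i a_i = 2nK$ and $a_i \ge K$, I take candidates $\{a,b,p\}$ with societal order $aLbLp$, build a set $T$ containing, for each $a_i$, a voter $v_i$ of weight $a_i$ voting $(b,a,p)$, add a fixed set $S$ of calibrated voters, and set the bribe limit to $k=n$. The single-peaked ballots available for $aLbLp$ are $\{(a,b,p),(b,a,p),(b,p,a),(p,b,a),(a),(b),(p)\}$, and since Baldwin is eliminate(Borda) I evaluate truncated ballots with the round-up scheme, so that a bullet vote $(a)$ scores $a$ at $2$ and leaves $b,p$ at $0$, and likewise for $(p)$.

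The engine of the reduction is the effect of a single bribe. Re-casting a $(b,a,p)$ voter of weight $w$ as the bullet $(a)$ changes the first-round Borda tallies of $a,b,p$ by $(+w,-2w,0)$, while re-casting it as $(p)$ changes them by $(-w,-2w,+2w)$; both moves strip $b$ of $2w$ points, and the bullets are the rational targets precisely because they remove the most weight from $b$. Let $S_a,S_b,S_p$ denote the Borda points that $S$ contributes to $a,b,p$. I would calibrate $S$ so that $S_b - S_a = 2nK-2$, the $S$-voters give $p$ a pairwise edge of exactly $+2$ over $a$, and $S_p$ is large enough that $p$ is never the first candidate eliminated. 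Writing $W_a,W_p$ for the total bribed weight moved to $(a)$ and $(p)$, I first argue that an optimal bribery leaves no $T$-voter unbribed (an unbribed $(b,a,p)$ voter only raises $b$'s Borda score, making $b$ harder to eliminate, and only adds to the $a\succ p$ majority), so $W_a+W_p=2nK$. Then eliminating $b$ in the first round forces $W_a > (S_b-S_a)/2 = nK-1$, i.e. $W_a \ge nK$, while $p$ strictly beating $a$ in the second round forces $W_a < nK+1$, i.e. $W_a \le nK$. Hence $W_a=W_p=nK$, so the $T$-voters re-cast as $(a)$ have weights summing to $nK$, a Partition' solution; conversely a Partition' solution yields this balanced bribery and makes $p$ the unique winner.

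As in Theorem \ref{wB-eV-SP}, a separate rationality argument shows the bribes can be restricted to $T$: bribing a voter whose ballot is not $(b,a,p)$ is dominated (changing $(a)$ or $(b,p,a)$ wastes the bribe or forces a compensating second bribe), and because every $T$-voter has weight $a_i \ge K$, at least as large as any $(b,a,p)$ voter placed in $S$, one may always prefer to spend a bribe on $T$. This is exactly the role of the $a_i \ge K$ hypothesis of Partition'.

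The main obstacle is the final consistency check: producing one fixed set $S$ that simultaneously realizes all calibration targets and, crucially, rules out the degenerate bribery that eliminates $a$ rather than $b$ in the first round. If $W_a$ is made small then $W_p \approx 2nK$ and $a$ becomes the first eliminated, after which $p$ meets $b$; I must ensure $p$ cannot win that contest, which requires $S$ to carry a $b$-over-$p$ pairwise margin of at least $2nK$. The difficulty, and where Baldwin departs from eliminate(veto), is that in Borda a ballot's contribution to first-round scores and its contribution to the second-round pairwise margins are entangled, so boosting the $b$-over-$p$ margin also perturbs $S_b-S_a$ and $S_p$. The bulk of the work is therefore choosing the weights of the fixed voter types so that all four requirements (the value of $S_b-S_a$, the $+2$ edge of $p$ over $a$, the large $S_p$, and the $\ge 2nK$ edge of $b$ over $p$) hold at once, after which the equivalence ``$p$ is the unique winner $\iff$ a Partition' solution exists'' follows.
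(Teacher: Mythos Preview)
Your high-level strategy---reduce from Partition', place the partition weights on a set $T$, calibrate a fixed set $S$, and argue by dominance that bribes may be confined to $T$---is exactly the paper's. But you have not completed the proof: you explicitly leave the construction of $S$ as ``the bulk of the work'' and only list the four targets it must hit. Until an explicit $S$ is written down and checked, neither the forward/backward direction nor the rationality argument can be verified, so as it stands this is a plan rather than a proof.

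It is also worth comparing your concrete choices to the paper's, because the paper's choices make the missing calibration much easier. The paper takes the societal order $aLpLb$ (so $p$ sits in the middle), lets every $T$-voter cast the bullet $(b)$, and bribes to the \emph{full} ballots $(p,a,b)$ and $(a,p,b)$. Two things are gained. First, a $(b)$ bullet contributes only to $b$'s Borda score and to nothing in any second round once $b$ is gone; this makes the ``only rational target is a $(b)$ voter'' argument essentially a one-line dominance check, whereas with your $T$-ballot $(b,a,p)$ you must separately rule out bribing $(b,p,a)$, $(a,b,p)$, $(b)$, etc., each of which interacts differently with the two rounds. Second, both of the paper's bribe destinations raise \emph{both} $a$'s and $p$'s first-round scores while stripping $b$, so the first-round goal (eliminate $b$) is automatically met regardless of the split, and only the second round pins down the partition. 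In your setup, the bullet $(a)$ helps $a$ but not $p$, and the bullet $(p)$ helps $p$ but hurts $a$, so the ``eliminate $b$ first'' and ``$p$ beats $a$ second'' constraints are entangled; this is precisely the difficulty you flag at the end and the reason you could not finish. A smaller point: you phrase the second-round condition as ``$p$ strictly beating $a$'', but the theorem is stated in the non-unique winner model, so the calibration constants (your ``$+2$'' edge) would need to be adjusted to avoid an off-by-one slack that lets $W_a = nK+1$ succeed.
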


\begin{proof}
 The problem is clearly in NP. To prove NP-hardness, we show a reduction from an arbitrary instance $\{a_1, \cdots, a_n\}$ of Partition', where $\sum_i a_i = 2nK$ and $a_i \geq K, \forall i$, 
to an instance of weighted-bribery $(C, V, p, k)$, where $C = \{a, b, p\}$ is the set of candidates, $aLpLb$ is the linear order of the candidates, and $V$ is the set of the following voters. 
\begin{enumerate}
 \item For each $a_i$, construct a voter $v_i$ whose weight is $2a_i$ and who votes for $(b)$. Let $T$ be the set of all these voters.
 \item Construct the following set of voters $S$: $n$ voters of weight $2K$ each voting for $(b, p, a)$, $n$ voters of weight $2K$ each voting for $(b)$, $n$ voters of weight $K$ each voting for 
$(a)$, 2 voters of weight 1 each voting for $(b, p, a)$, 2 voters of weight 1 each voting for $(a)$, and 1 voter of weight 1 voting for $(p)$. 
\end{enumerate}
Set the bribe limit $k = n$ and $V = S \cup T$. 

First, observe that if $a$ is eliminated in the first round then $p$ cannot win, no matter how we choose the $n$ voters to bribe. Therefore the only way to make $p$ a winner is by making sure that 
$a$ does not get eliminated. Now since the bribe limit is $n$, the maximum total weight of the voters bribed cannot exceed $4nK$. Also note that we need to bribe at least $4nK$ weight because 
otherwise either $p$ or $a$ will be eliminated in the first round. Additionally, it is possible to show that even now $p$ can be a winner only if there is a total weight of $2nK$ voting for $(p, a, 
b)$ and a total weight of $2nK$ voting for $(a, p, b)$ (informally, this argument holds because failing to have equal votes on $(p, a, b)$ and $(a, p, b)$ will cause either $p$ or $a$ to be eliminated 
in the first round). Therefore, the only part that remains to be argued is to show that any bribery with at most $n$ bribes can be achieved if we only bribe the voters in $T$.  

To prove this, let us assume the contrary and suppose that there was a voter $v \in S$ who was part of a successful bribery in favor of $p$. Before the bribe, $v$ has one among $(b, p, a)$, 
$(b)$, $(p)$ and $(a)$ has his or her vote. Now, given the fact that we always want the gain for $p$ to be as much as possible, and because $a$ needs to be given enough score to prevent it from being 
eliminated first, the only rational choice of voter who has to be bribed is one who's current vote is $(b)$ (bribing one with $(a)$ is irrational since to compensate the reduction in the score 
of $a$ we will have to bribe another voter, bribing a voter with $(b, p, a)$ is irrational since if this voter was bribed to vote for $(p, a, b)$ or $(a, p, b)$ then the gain for $p$ would have been 
more if instead a voter voting $(b)$ was bribed). But then, all the voters in $T$ are not only voting for $(b)$ initially, but they also have weights which are at least as much as of those voters 
voting $(b)$ in $S$. Hence, any bribery that is possible with at most $n$ bribes can be achieved if we bribe only the voters from $T$. Now, since the total weight of all voters in $T$ is equal 
to $4nK$, this implies that $p$ can be a winner if and only if there is a partition in Partition'. 
\qed
\end{proof}

Note that almost the same proof as above can be used to show that in single-peaked electorates CWCM with top-truncated ballots is NP-complete for 3-candidate Baldwin's rule. 

\section{Is Allowing Top-truncated Voting in Single-Peaked Electorates Always Beneficial?} \label{avr}
Although we have seen instances like in 3-candidate scoring rules with round-up evaluation scheme where the complexity of manipulation decreases as a result of moving from a purely single-peaked 
setting to a setting where top-truncated votes are allowed, we haven't really seen examples of any other voting rule which shows this behavior. Moreover, we also know that with a different evaluation 
scheme like round-down or average-score this behavior is no longer seen for even 3-candidate scoring rules. Therefore, a natural question one could ask is: ``What role does the evaluation scheme 
play? Is it possible that given a voting rule one can always construct an evaluation scheme so that it will be beneficial to allow top-truncated voting in single-peaked electorates?''. Alternatively, 
one could also ask: ``Is there a voting system for which it is always easy to manipulate when top orders are allowed?''. We answer the former question in the negative and the latter one in the 
affirmative. We show that, as long as all the unranked candidates are assumed to be tied and are assumed to be ranked below the ranked candidates (which is the natural definition of a top-truncated 
vote), there is at least one voting system for which, irrespective of how the top-truncated votes are dealt with, it is NP-hard to manipulate in purely single-peaked settings, but is easy to 
manipulate when top-truncated votes are allowed.  

% \begin{theorem}
%  There exists a voting system for which, in single-peaked settings, 
%  \begin{enumerate}[i)]
%   \item CWCM with complete votes is NP-complete
%   \item CWCM with top-truncated votes is in P
%  \end{enumerate}  
% \end{theorem}
\begin{restatable}{theorem}{avr}
 There exists a voting system for which, in single-peaked settings, 
 \begin{enumerate}
  \item CWCM with \textbf{complete votes} is NP-complete.
  \item CWCM with \textbf{top-truncated votes} is in P.
 \end{enumerate}  
\end{restatable}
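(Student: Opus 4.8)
The plan is to \emph{construct} an artificial voting rule that switches its behaviour according to whether the submitted ballots are complete, exploiting a known single-peaked hardness result on one side and making the other side trivial. Concretely, I would fix three candidates and define a rule $F$ as follows: on any profile in which \emph{every} ballot is a complete order, $F$ returns the STV winner set; on any profile containing \emph{at least one} top-truncated ballot, $F$ declares the entire candidate set $C$ to be the set of winners. Since $F$ ignores the ballots once a truncated vote is present, the construction is automatically robust to \emph{how} truncated ballots are scored (round-up, round-down, average, or anything else), which is exactly the ``irrespective of how the top-truncated votes are dealt with'' feature promised in the surrounding discussion.

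For part (1), I would observe that in the complete-votes version of CWCM every ballot --- of both the non-manipulators in $S$ and the manipulators in $T$ --- is by definition a complete order, so on every such instance $F$ coincides with STV. Hence CWCM for $F$ with complete votes in single-peaked electorates is precisely CWCM for $3$-candidate STV in single-peaked electorates, which is NP-hard by \citeauthor{walsh}~\cite{walsh}. Membership in NP is routine (guess the manipulators' complete ballots and compute the STV winner in polynomial time), so part (1) follows.

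For part (2), when top-truncated ballots are permitted the manipulators can always cast the single-candidate ballot $(p)$, which is a valid top-truncated single-peaked vote with respect to \emph{any} societal order $L$. As soon as one such ballot is present the profile falls into the second branch of $F$, so $F$ outputs $C$ and, since $p \in C$, the candidate $p$ is a winner. Thus, whenever $T \neq \emptyset$, the answer to CWCM with top-truncated votes is unconditionally ``yes''; the degenerate case $T = \emptyset$ is decided in polynomial time simply by computing $F(S)$ and checking whether $p$ is a winner. Either way CWCM with top-truncated votes is in $P$, and because the evaluation scheme never enters the argument the result holds irrespective of how truncated ballots are handled.

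The step requiring the most care is not any calculation but the interface between the two branches: I must verify that the problem definitions prevent the manipulators from crossing branches illegitimately. In the complete-votes problem they are forbidden from submitting truncated ballots, so they cannot reach the trivial branch and the full STV hardness survives; in the truncated-votes problem a single $(p)$ ballot guarantees they reach it. The only remaining things to confirm are that $(p)$ is genuinely single-peaked (immediate, as any order over a single candidate is single-peaked) and that $F$ is a well-defined winner-set rule on mixed profiles (it is, since the ``some ballot truncated'' branch subsumes every mixed profile). I would also remark that $F$ is admittedly contrived, but this is harmless: the theorem is an existence statement whose purpose is to refute any general claim that top-truncated voting must always make manipulation harder.
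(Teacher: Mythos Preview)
Your proposal is correct and establishes the existence claim, but it takes a genuinely different route from the paper's proof.

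The paper defines a concrete scoring-style rule (dubbed AVR) by a single closed-form formula: each candidate $c$ scores $\sum_{v} \sum_{a:\, c \succ_v a} (m - \mathrm{pos}(c))\, w(v)$, with $\mathrm{pos}(c) = m$ for unranked candidates. With three candidates and societal order $a\,L\,p\,L\,b$, the paper proves part~(1) directly via a Partition reduction (not by invoking STV), and part~(2) by observing that the single-candidate ballot $(p)$ gives $p$ the maximum possible score while contributing zero to every opponent, so all manipulators voting $(p)$ is optimal.

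Your construction is more economical: you piggyback on Walsh's single-peaked STV hardness for part~(1) and trivialise part~(2) by a branch that declares everyone a winner once any truncated ballot appears. This buys a shorter argument and makes the ``irrespective of the evaluation scheme'' point transparent, since your rule never scores truncated ballots at all. The paper's AVR, by contrast, is a single uniform formula with no case split on ballot type; its handling of truncated ballots is fixed by the formula itself, so the ``irrespective'' claim there is narrower (it holds under the canonical reading of unranked candidates as tied at the bottom). Your rule is more blatantly artificial, but since the theorem is purely existential that is immaterial; the only external dependency you add is Walsh's STV result, which the paper already cites.
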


\begin{proof}
Let us first define the (artificial) voting system that we consider here.

% \begin{definition}[AVR]
%  For every candidate c, for every voter v, and for all candidates $a \in C$, $a \neq c$, determine the score of c as follows:
%   \begin{align*}
%     s(c) =  s(c) + (m - pos(c))w(v), \: \text{if $c \succ_{v} a$}  
%   \end{align*}  
%   where pos(c) is the position of candidate $c$ in the preference order of $v$ if it is ranked ($pos(c) = i$ if $c$ is ranked in the $i$th position by $v$) and is $m$ otherwise, $m = |C|$, w(v) is 
% the weight of voter $v$, and $c \succ_{v} a$ denotes that $c$ is ranked above $a$ by $v$. 
% \end{definition}

\begin{definition}[Artificial Voting Rule (AVR)]
 Given the set $V$ of voter preferences, the score of each candidate $c \in C$ is calculated as
  \begin{align*}
    s(c) =  \displaystyle \sum_{v \in V} \displaystyle \sum_{\substack{a \in C \\ {a \neq c} \\ {c \succ_{v} a}}} (m - pos(c))w(v)  
  \end{align*}  
  where $pos(c)$ is the position of candidate $c$ in the preference order of $v$ if it is ranked ($pos(c) = i$ if $c$ is ranked in the $i$th position by $v$) and is $m$ otherwise, $m = |C|$, $w(v)$ 
is the weight of voter $v$, and $c \succ_{v} a$ denotes that $c$ is ranked above $a$ by $v$. 
\end{definition}

\begin{example}
 Let $C = \{a, b, c, d, e\}$ be the set of candidates and let there be a single voter of weight $w$ with the preference ordering $d \succ b \succ a \succ e \succ c$. Then the scores of the candidates 
are $s(a) = 4w$, $s(b) = 9w$, $s(c) = 0$, $s(d) = 16w$, and $s(e) = w$.    
\end{example}

So now we need to show that for AVR, in single-peaked settings, CWCM with complete votes is NP-complete while CWCM with top-truncated votes is in P. 

We show the first part by a reduction from Partition. Given an arbitrary instance $\{k_i\}_{1\leq i \leq t}$, $\sum_i k_i = 2K$, of Partition, construct the following instance of CWCM, where $a, b,$ 
and $p$ are the three candidates and $aLpLb$ is the linear ordering over the candidates. In $S$, let there be two voters of weight $7K$ each voting for $(a, p, b)$ and $(b, p, a)$ respectively, and 
two voters of weight $K$ each voting for $(p, a, b)$ and $(p, b, a)$ respectively. In $T$, let each $k_i$ have a vote of weight $k_i$. According to the rule defined above, the scores of $a, b,$ and 
$p$ are $29K, 29K,$ and $22K$ respectively.

Suppose there exists a partition. Let those manipulators in one half vote $(p, a, b)$ while those in the other half vote $(p, b, a)$. As a result the scores of $a, b,$ and $p$ are all $30K$ and hence 
$p$ is a winner. 

Conversely, suppose there exists a manipulation in favor of $p$. In AVR it is reasonable to assume that all the manipulators place $p$ first. So, now, let $x$ and $y$ be total weight of the 
manipulators in $T$ who vote $(p, a, b)$ and $(p, b, a)$ respectively. Since there exists a successful manipulation in favor of $p$, the score of $p$ should be at least as much as that of 
$a$. Therefore, we have: $22K + 4x + 4y \geq 29K + x$. Using the fact that $x + y = 2K$, this simplifies to $x \leq K$. Doing the same with respect to $p$ and $b$ we have, $y \leq K$. But 
then since $x + y = 2K$, this implies that $x = K$ and $y = K$ and that there exists a partition.     

For the second part, it is easy to see that the optimal strategy for the manipulators is to just vote $(p)$.\qed
\end{proof}

  \section{Conclusion and Future Work}
The central theme of this paper was the reinstatement of combinatorial protections in single-peaked and nearly-single peaked electorates by allowing top-truncated voting. We observed this behavior 
first in the case of manipulation and showed how for different voting protocols manipulation with complete votes was in $P$ whereas manipulation with top-truncated votes jumped to being NP-complete. 
These results were followed by the results for bribery where, again, we observed similar behavior for the voting rules considered. In studying the above two, we note that, to the best of our 
knowledge, we are the first to systematically look at the impact on complexity of manipulative actions when the electorate is single-peaked or nearly single-peaked and when top-truncated preferences 
are allowed. In addition to the above results, we also showed an instance of a natural voting system (eliminate(veto)) where, contrary to intuition, the complexity of manipulation, when top-truncated 
ballots are allowed, actually increases from being in $P$ in the general case to being NP-complete in the single-peaked case. Finally, we concluded our discussion by showing the example of a voting 
system where allowing top-truncated voting isn't beneficial in the sense that it actually results in a decrease in the complexity of manipulation.

% There are many possible avenues for future work. Foremost would be look at some other interesting voting rules and also consider other types of partial preferences (like bottom orders, weak orders 
% etc.) to try and see if similar behavior is observed in them. In doing so, it would also be interesting to see if there is a scope for arriving at some generalizations. Second, in this paper, we have 
% considered only manipulation and bribery, but not control. Therefore, we feel that it would be worthwhile to see if similar observations can be made for the problem of control as well. Third, while 
% considering nearly single-peaked preferences in this paper, we have essentially talked about only one notion of nearness, namely, the $k$-maverick notion. However, there are several other notions of 
% nearness (see \cite{erdelyi2013}) and seeing if we can obtain similar results for them as well would be interesting. Finally, as is the case for any work that considers the worst-case complexities, a 
% possible criticism here could be that we are looking at the worst-case complexities. Therefore, another interesting direction to pursue would be to look at the average-case complexities (see for e.g. 
% \cite{procaccia2006,procaccia2007,friedgut2008}).    

There are many possible avenues for future work. Foremost would be look at some other interesting voting rules and also consider other types of partial preferences (like bottom orders, weak orders 
etc.) to try and see if similar behavior is observed in them. Second, in this paper we have considered only manipulation and bribery, but not control. Therefore, we feel that it would be worthwhile 
to see if similar observations can be made for the problem of control as well. Third, while considering nearly single-peaked preferences in this paper, we have essentially talked about only one 
notion of nearness, namely, the $k$-maverick notion. However, there are several other notions of nearness (see \cite{erdelyi2013}) and seeing if we can obtain similar results for them as well would 
be interesting. Finally, we have considered only weighted elections in this work, but we believe that looking at the unweighted case would be very interesting and it is definitely something we 
consider as a future research direction.

 \bibliographystyle{splncsnat}
\bibliography{paper}

\end{document}